    \crefname{figure}{Figure}{figures}
\theoremstyle{remark}
\newtheorem{lemma}{Lemma}
\def\be{\begin{equation}}
\def\ee{\end{equation}}
\def\ba{\begin{eqnarray}}
\def\ea{\end{eqnarray}}
\def\Nl{{\mathchoice
{\setbox0=\hbox{$\displaystyle\rm N$}\hbox{\hbox to0pt
{\kern0.4\wd0\vrule height0.9\ht0\hss}\box0}}
{\setbox0=\hbox{$\textstyle\rm N$}\hbox{\hbox to0pt
{\kern0.4\wd0\vrule height0.9\ht0\hss}\box0}}
{\setbox0=\hbox{$\scriptstyle\rm N$}\hbox{\hbox to0pt
{\kern0.4\wd0\vrule height0.9\ht0\hss}\box0}}
{\setbox0=\hbox{$\scriptscriptstyle\rm N$}\hbox{\hbox to0pt
{\kern0.4\wd0\vrule height0.9\ht0\hss}\box0}}}}
\def\Zl{{\mathchoice
{\setbox0=\hbox{$\displaystyle\rm Z$}\hbox{\hbox to0pt
{\kern0.4\wd0\vrule height0.9\ht0\hss}\box0}}
{\setbox0=\hbox{$\textstyle\rm Z$}\hbox{\hbox to0pt
{\kern0.4\wd0\vrule height0.9\ht0\hss}\box0}}
{\setbox0=\hbox{$\scriptstyle\rm Z$}\hbox{\hbox to0pt
{\kern0.4\wd0\vrule height0.9\ht0\hss}\box0}}
{\setbox0=\hbox{$\scriptscriptstyle\rm Z$}\hbox{\hbox to0pt
{\kern0.4\wd0\vrule height0.9\ht0\hss}\box0}}}}
\def\Ql{{\mathchoice
{\setbox0=\hbox{$\displaystyle\rm Q$}\hbox{\hbox to0pt
{\kern0.4\wd0\vrule height0.9\ht0\hss}\box0}}
{\setbox0=\hbox{$\textstyle\rm Q$}\hbox{\hbox to0pt
{\kern0.4\wd0\vrule height0.9\ht0\hss}\box0}}
{\setbox0=\hbox{$\scriptstyle\rm Q$}\hbox{\hbox to0pt
{\kern0.4\wd0\vrule height0.9\ht0\hss}\box0}}
{\setbox0=\hbox{$\scriptscriptstyle\rm Q$}\hbox{\hbox to0pt
{\kern0.4\wd0\vrule height0.9\ht0\hss}\box0}}}}
\def\Rl{{\mathchoice
{\setbox0=\hbox{$\displaystyle\rm R$}\hbox{\hbox to0pt
{\kern0.4\wd0\vrule height0.9\ht0\hss}\box0}}
{\setbox0=\hbox{$\textstyle\rm R$}\hbox{\hbox to0pt
{\kern0.4\wd0\vrule height0.9\ht0\hss}\box0}}
{\setbox0=\hbox{$\scriptstyle\rm R$}\hbox{\hbox to0pt
{\kern0.4\wd0\vrule height0.9\ht0\hss}\box0}}
{\setbox0=\hbox{$\scriptscriptstyle\rm R$}\hbox{\hbox to0pt
{\kern0.4\wd0\vrule height0.9\ht0\hss}\box0}}}}
\def\Cl{{\mathchoice
{\setbox0=\hbox{$\displaystyle\rm C$}\hbox{\hbox to0pt
{\kern0.4\wd0\vrule height0.9\ht0\hss}\box0}}
{\setbox0=\hbox{$\textstyle\rm C$}\hbox{\hbox to0pt
{\kern0.4\wd0\vrule height0.9\ht0\hss}\box0}}
{\setbox0=\hbox{$\scriptstyle\rm C$}\hbox{\hbox to0pt
{\kern0.4\wd0\vrule height0.9\ht0\hss}\box0}}
{\setbox0=\hbox{$\scriptscriptstyle\rm C$}\hbox{\hbox to0pt
{\kern0.4\wd0\vrule height0.9\ht0\hss}\box0}}}}
\def\Hl{{\mathchoice
{\setbox0=\hbox{$\displaystyle\rm H$}\hbox{\hbox to0pt
{\kern0.4\wd0\vrule height0.9\ht0\hss}\box0}}
{\setbox0=\hbox{$\textstyle\rm H$}\hbox{\hbox to0pt
{\kern0.4\wd0\vrule height0.9\ht0\hss}\box0}}
{\setbox0=\hbox{$\scriptstyle\rm H$}\hbox{\hbox to0pt
{\kern0.4\wd0\vrule height0.9\ht0\hss}\box0}}
{\setbox0=\hbox{$\scriptscriptstyle\rm H$}\hbox{\hbox to0pt
{\kern0.4\wd0\vrule height0.9\ht0\hss}\box0}}}}
\def\Ol{{\mathchoice
{\setbox0=\hbox{$\displaystyle\rm O$}\hbox{\hbox to0pt
{\kern0.4\wd0\vrule height0.9\ht0\hss}\box0}}
{\setbox0=\hbox{$\textstyle\rm O$}\hbox{\hbox to0pt
{\kern0.4\wd0\vrule height0.9\ht0\hss}\box0}}
{\setbox0=\hbox{$\scriptstyle\rm O$}\hbox{\hbox to0pt
{\kern0.4\wd0\vrule height0.9\ht0\hss}\box0}}
{\setbox0=\hbox{$\scriptscriptstyle\rm O$}\hbox{\hbox to0pt
{\kern0.4\wd0\vrule height0.9\ht0\hss}\box0}}}}
\newcommand{\cale}{\mathcal E}
\newcommand{\cm}{\mathcal M}
\newcommand{\cP}{\mathcal P}
\newcommand{\cs}{\mathcal S}
\renewcommand{\vec}[1]{\bm{#1}}
\newcommand{\eqa}{\begin{eqnarray}}
\newcommand{\neqa}{\end{eqnarray}}
\def\I{{\mathbf 1}}
\def\consistent{{\rm consistent}}
\def\conv{{\rm conv}}
\DeclareMathOperator{\Tr}{Tr}
\begin{document}

\title{Theory-independent monitoring of the decoherence of a superconducting qubit\\ with generalized contextuality}

\author{Albert Aloy}
\email{Albert.Aloy@oeaw.ac.at}
\affiliation{Institute for Quantum Optics and Quantum Information,
Austrian Academy of Sciences, Boltzmanngasse 3, A-1090 Vienna, Austria}
\affiliation{Vienna Center for Quantum Science and Technology (VCQ), Faculty of Physics, University of Vienna, Vienna, Austria}

\author{Matteo Fadel}
\email{fadelm@phys.ethz.ch}
\affiliation{Department of Physics, ETH Z\"{u}rich, 8093 Z\"urich, Switzerland}

\author{Thomas D.\ Galley}
\email{Thomas.Galley@oeaw.ac.at}
\affiliation{Institute for Quantum Optics and Quantum Information,
Austrian Academy of Sciences, Boltzmanngasse 3, A-1090 Vienna, Austria}
\affiliation{Vienna Center for Quantum Science and Technology (VCQ), Faculty of Physics, University of Vienna, Vienna, Austria}

\author{Caroline L.\ Jones}
\email{CarolineLouise.Jones@oeaw.ac.at}
\affiliation{Institute for Quantum Optics and Quantum Information,
Austrian Academy of Sciences, Boltzmanngasse 3, A-1090 Vienna, Austria}
\affiliation{Vienna Center for Quantum Science and Technology (VCQ), Faculty of Physics, University of Vienna, Vienna, Austria}

\author{Markus P.\ M\"uller}
\email{Markus.Mueller@oeaw.ac.at}
\affiliation{Institute for Quantum Optics and Quantum Information,
Austrian Academy of Sciences, Boltzmanngasse 3, A-1090 Vienna, Austria}
\affiliation{Vienna Center for Quantum Science and Technology (VCQ), Faculty of Physics, University of Vienna, Vienna, Austria}
\affiliation{Perimeter Institute for Theoretical Physics, 31 Caroline Street North, Waterloo, Ontario N2L 2Y5, Canada}

\date{January 12, 2026}

\begin{abstract}
Characterizing the nonclassicality of quantum systems under minimal assumptions is an important challenge for quantum foundations and technology. Here we introduce a theory-independent method of process tomography and perform it on a superconducting qubit. We demonstrate its decoherence without assuming quantum theory or trusting the devices by modelling the system as a general probabilistic theory. We show that the superconducting system is initially well-described as a quantum bit, but that its realized state space contracts over time, which in quantum terminology indicates its loss of coherence. The system is initially nonclassical in the sense of generalized contextuality: it does not admit of a hidden-variable model where statistically indistinguishable preparations are represented by identical hidden-variable distributions. In finite time, the system becomes noncontextual and hence loses its nonclassicality. Moreover, we demonstrate in a theory-independent way that the system undergoes non-Markovian evolution at late times. Our results extend theory-independent tomography to time-evolving systems, and show how important dynamical physical phenomena can be experimentally monitored without assuming quantum theory. 
\end{abstract}

\maketitle

\section{Introduction}
Demonstrating  that some quantum systems have properties which genuinely defy classical explanation is at the forefront of current theoretical and experimental research.
The strongest form of nonclassicality, Bell nonlocality~\cite{Bell}, allows us to refute local hidden-variable models for experiments on spatially separated quantum systems. It is often experimentally more tractable to focus on single systems, avoiding the need for spacelike separation, and the Kochen-Specker theorem~\cite{KochenSpecker} precludes the existence of noncontextual hidden-variable models for such systems of Hilbert space dimension at least three.

The Kochen-Specker theorem relies, however, on several undesirable assumptions: it assumes the validity of quantum theory, the fact that the performed measurements are noiseless projective measurements, and that the outcomes depend deterministically on the underlying hidden variables. The latter two assumptions in particular are detrimental to the goal of the practical certification of nonclassicality. Spekkens' notion of generalized noncontextuality overcomes these difficulties~\cite{Spekkens2005}. In a nutshell, it stipulates that statistically indistinguishable operations (measurements, preparations or transformations) should have identical representations on the hidden-variable level. Generalized contextuality admits of robust experimental detection~\cite{Mazurek2016,Kunjwal,Krishna}, and subsumes a variety of natural notions of nonclassicality (such as Wigner negativity~\cite{Spekkens2008,StructureTheorem2024}) into a single simple criterion.

Here, we certify the generalized contextuality of a superconducting qubit experimentally, directly from the measurement statistics and without assuming quantum theory. In contrast to earlier experimental detection of Kochen-Specker contextuality in a superconducting platform~\cite{Jerger}, our analysis thus makes significantly weaker assumptions. We analyze how the amount of contextuality changes over time --- in particular, the system becomes noncontextual in finite time due to decoherence. In fact, we do more than this: we characterize the system completely, for different evolution times, by generalizing theory-agnostic tomography~\cite{Mazurek2021,Grabowecky2022}. That is, by performing many preparation and measurement procedures, we determine its space of states and of effects (measurement outcomes) which, when combined linearly, reproduce the experimental statistics. Such pairs are known as generalized probabilistic theories (GPT)~\cite{Barrett2007,Janotta,Mueller,Plavala}, with quantum theory and its qubit as a special case. For a quantum bit, the state space is the three-dimensional Bloch ball, but a GPT's state space can be any convex set of any dimension. Without assuming the validity of quantum theory, we determine directly from the data that our superconducting system's normalized state space is likely three-dimensional and in shape close to a ball. Observing the contraction of this ``bumpy Bloch ball'' yields a theory-independent monitoring of the quantum decoherence process, of its loss of contextuality, and of its non-Markovianity~\cite{Li} at late times.

Superconducting qubits~\cite{Kjaergaard} admit the rapid implementation of a very large number of preparation and measurement procedures, which is necessary for the data-demanding sort of process tomography that we implement here. Our results hence demonstrate the suitability of the superconducting system for this sort of theory-independent analysis.

\section{Results}

\subsection{Experimental setup}\label{sec:experimental_setup}
Our experiments are performed on a transmon superconducting qubit~\cite{Koch} with frequency $\omega_q=2\pi\cdot\unit{5.05}{GHz}$, hosted inside a three-dimensional readout cavity resonator with frequency $\omega_r=2\pi\cdot\unit{8.56}{GHz}$. Strong dispersive coupling between the two allows us to measure the qubit's state by performing a transmission measurement on the resonator to probe its qubit-state-dependent resonance frequency, as standard in circuit-QED~\cite{BlaisRMP}. This signal is amplified by both cryogenic and room temperature amplifiers, which give a readout fidelity of $85(1) \%$.
We measure a qubit energy relaxation rate $T_1=\unit{21.9 \pm 0.4}{\mu s}$ and Ramsey decoherence rate $T_2^\ast=\unit{12.7 \pm 0.6}{\mu s}$, as well as a resonator frequency shift $\chi =\unit{-2\pi\cdot 1.89}{MHz}$ when the qubit is excited.
Preparation of the initial state is achieved by applying a $\unit{200}{ns}$ resonant Rabi pulse with well-defined amplitude and phase to the qubit $t=0$ state $\ket{0}$. This amplitude and phase are chosen from a list $\mathcal{V}_m$ defining the space of $m$ states. In order to have a uniform distribution of $m$ points on the surface of the Bloch sphere, we chose a Fibonacci distribution~\cite{Gonzalez}.
We then wait for a variable time $\tau$, after which a second Rabi pulse is applied in order to specify the measurement basis. Again, amplitude and phase of the pulse are chosen from a list $\mathcal{V}_n$, now specifying the space of effects. In our case, $m=n=100$, and the two lists of preparations and measurement directions coincide.
Finally, the qubit's state is measured in the computational $z$-basis $\{\ket{0},\ket{1}\}$ by the standard circuit-QED readout technique in the strong-dispersive-coupling regime, as mentioned above. For each combination of state preparation and measurement, this sequence is repeated $2000$ times in order to average over the qubit projection noise. This gives us the frequencies $F_{ij}$ from which we estimate the probabilities $p(0|P_i,M_j)$ entering Eq.~\eqref{eq:D} below.

We describe the experimental setup and the qubit readout characterization procedure in more detail in the Methods, Subsection~\ref{SubsecExperimental}.

\subsection{Theory-independent analysis}\label{subsec:theoryindependent_analysis}

To provide a theory-independent analysis, we view the experimental setup as a prepare-and-measure scenario, see \Cref{fig:prepare-and-measure}. We have $m$ possible preparation procedures $P_i$ and $n$ measurement procedures $M_j$, each with outcome $a\in\{0,1\}$. Moreover, we have the option to introduce a waiting time $\tau$ between preparation and measurement. Without loss of generality, we choose the convention that this evolution for time $\tau$ is considered a part of the preparation procedure, yielding effective preparation procedures $P_i^\tau$. Thus, we probe the system (the superconducting qubit) to estimate the conditional probabilities $p(a|P_i^\tau,M_j)$. By normalization, it is sufficient to obtain the statistics for one of the outcomes which we denote by $a=0$ -- shown in Figure~\ref{fig:frequency_data} for $\tau=0$. Statistics are gathered for all the possible $i,j$ and for $\tau=\{0,5,10,15,20,30,40,50\}$ $\mu$s. For every fixed value of $\tau$, this defines a corresponding \textit{operational theory}~\cite{Spekkens2005}: a collection of preparation and measurement procedures together with a probability rule that describe the statistical properties of the laboratory system.

\begin{figure}[t]
\centering 
\includegraphics[trim=590 400 600 200,clip, width=0.8\columnwidth]{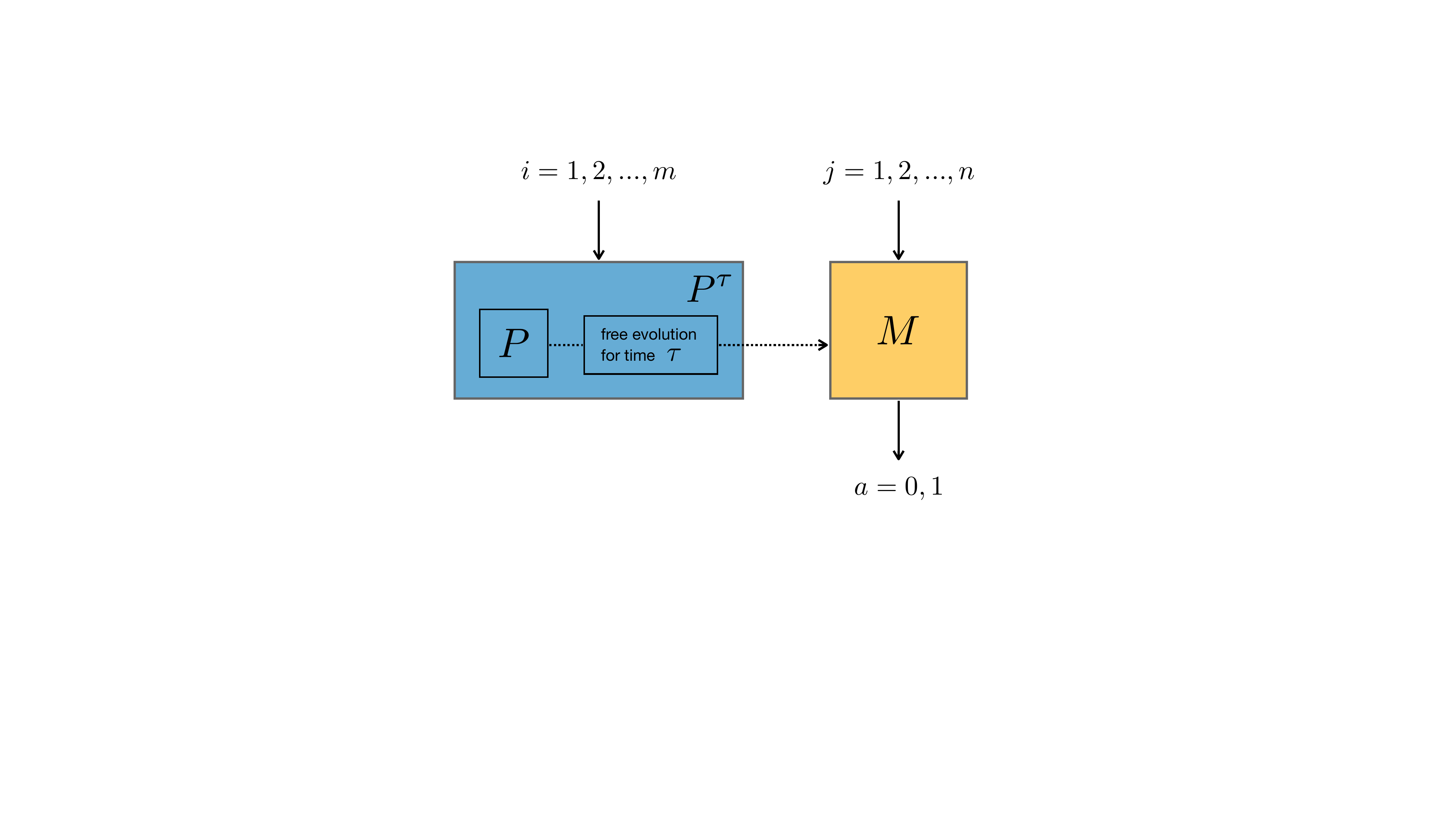}
\caption{\textbf{Diagram of the prepare-transform-measure setup of the experiment.} We regard the actual state preparation and the subsequent time evolution as a single preparation procedure, which is simply a convenient convention (analogous to the choice of Schr\"odinger versus Heisenberg picture). As a result, the prepared states will depend on the waiting time $\tau$, but the measurements will not.}
\label{fig:prepare-and-measure}
\end{figure}

\begin{figure}[t]
\centering 
\includegraphics[trim = 0 0 0 0 , clip, width=0.95\columnwidth]{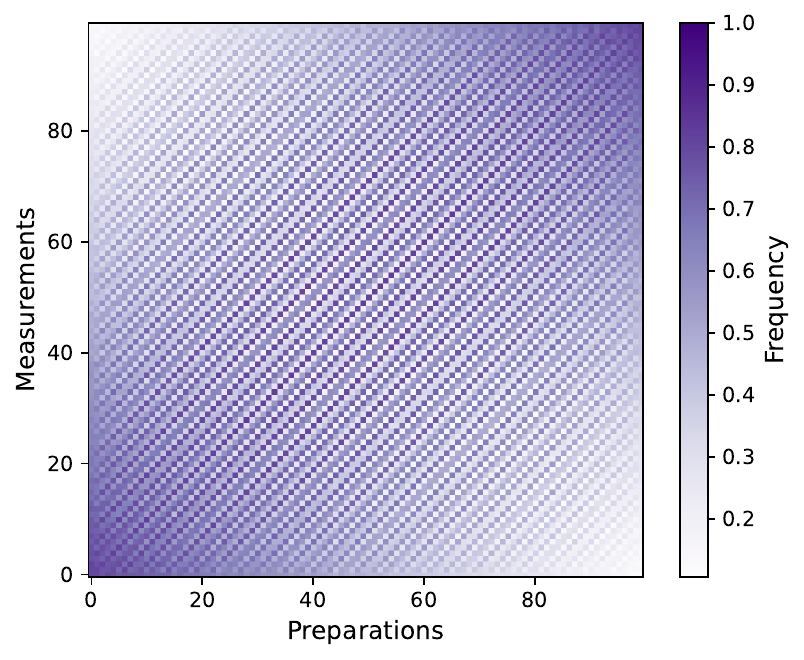}
\caption{\textbf{Frequency data collected in the experiment.} Experimentally measured frequency of occurrence of the outcome $a=0$ for the $m\times n$ table of preparations and measurements, for $\tau=0$.}
\label{fig:frequency_data}
\end{figure}

Following this operational approach, generalized noncontextuality~\cite{Spekkens2005,CataniLeifer} requires that procedures which are statistically equivalent at the operational level ({i.e.}, producing identical statistics if all other processes are fixed) must also be identically represented in the underlying hidden-variable model. A hidden-variable model (also known as an ontological model) of an operational theory is defined on a measurable space $\Lambda$. To each preparation $P$ in the theory, there corresponds a probability distribution $\mu_P(\lambda)$ on $\Lambda$, and to every outcome $a$ of every measurement $M$, there corresponds a response function $\chi_{a|M}(\lambda)$, such that $\sum_a \chi_{a|M}(\lambda) = 1$. The assignments are such that the probabilities are reproduced by the laws of classical probability theory,
\begin{align}\label{eq:prob_decomp}
   p(a|P,M)=\int_\Lambda \chi_{a|M}(\lambda)\mu_P(\lambda){\rm d}\lambda.
\end{align}
A hidden-variable model is preparation-noncontextual~\cite{Spekkens2005,CataniLeifer} if any two operationally equivalent preparations $P$ and $P'$ have $\mu_{P}(\lambda) = \mu_{P'}(\lambda)$. Two preparations $P$ and $P'$ are operationally equivalent if
\begin{align}
     p(a|P, M) =  p(a|P', M) \mbox{ for all }a,M.
\end{align}
Similarly, the model is measurement-noncontextual if operationally equivalent measurement outcomes (i.e., ones that have identical probabilities for all preparation procedures) are represented by the same response function.

For our scenario, we may not only consider the actually implemented procedures $M_i$ and $P_j$, but also statistical mixtures of those, such as the preparation procedure $P_{\rm mix}$ which results in implementing either preparation $P_1$ or $P_2$ with probability $\lambda$ or $1-\lambda$, respectively. We can then represent operationally equivalent preparation procedures $P$ and $P'$ by some \textit{state} $\vec s_P=\vec s_{P'}$, an element of some vector space over the real numbers. Statistical mixtures are then represented by convex combinations, such that, for example, $\vec s_{P_{\rm mix}}=\lambda \vec s_{P_1}+(1-\lambda)\vec s_{P_2}$. Similarly, we can represent operationally equivalent measurement-outcome pairs $(a,M)$ and $(a',M')$ as so-called \textit{effects} $\vec e_{a,M}=\vec e_{a',M'}$, elements of the dual space, such that $p(a|P,M)=\langle \vec e_{a,M},\vec s_P \rangle$. All possible states of a system form a convex set $\mathcal{S}$, and all possible effects yield another convex set $\mathcal{E}$. This defines a \textit{generalized probabilistic theory} (GPT)~\cite{Mueller,Plavala}, sometimes also called a ``GPT system''. Please see the Methods section for more details, and for how a quantum bit is described in this formalism. In particular, its space of pure states is the famous Bloch sphere, and the full space $\mathcal{S}$ of all (pure and mixed) states is the Bloch ball.

Quantum and classical systems are examples of GPT systems, but the framework contains many more exotic possibilities, such as theories with superstrong nonlocality~\cite{Barrett2007} and higher-order interference~\cite{UdudecBarnumEmerson,BarnumMuellerUdudec}. Originally, interest in GPTs originated in the research program to reconstruct the abstract formalism of quantum theory from simple physical principles~\cite{Hardy,DakicBrukner,MasanesMueller,Chiribella}. Here, however, we use the GPT formalism as a tool for a theory-independent description of our experimental statistics. In particular, knowing the GPT associated to an operational theory allows us to determine whether it admits of a preparation- and measurement-noncontextual hidden-variable model, using the criterion of \textit{simplex embeddability} demonstrated in~\cite{Schmid-Characterization}. If it does admit of such a model, we say that the physical system is noncontextual. This can be checked with a linear program~\cite{Gitton-solvable,Selby-LinearProgram}.

For example, it can be checked that the qubit GPT system is contextual, i.e.\ cannot be embedded into any classical probability simplex, in contrast to the stabilizer qubit, a GPT system defined by restricting the qubit to convex combinations of its stabilizer states and to stabilizer measurements~\cite{lilystone_contextuality_2019}, as illustrated in~\Cref{fig:stabilizer_qubit}. Hence, the stabilizer qubit is noncontextual (when restricted to preparations and measurements) or ``classical'' in this sense. This is related to the Gottesman-Knill theorem, establishing the efficient classical simulatability of stabilizer quantum computation~\cite{GottesmanKnill}. More generally, generalized contextuality is a resource for a number of information processing tasks~\cite{Spekkens_preparation_2009, Hameedi_communication_2017,Saha_preparation_2019,Schmid_contextual_advantage,Schmid_uniqueness_2022,Catani_resource_2024}.

\begin{figure*}[t]
\centering 
\includegraphics[trim=0 0 0 0,clip, width=0.9\linewidth]{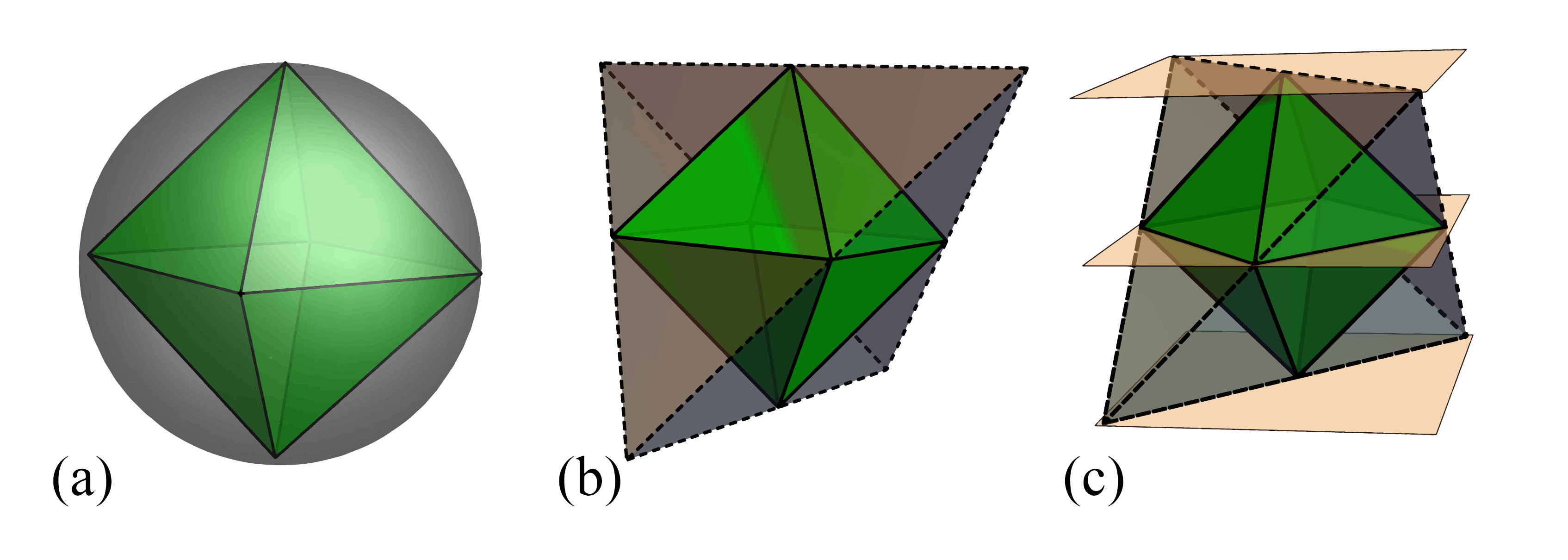}
\caption{\textbf{Relations between the state spaces of the qubit, stabilizer qubit~\cite{lilystone_contextuality_2019} and classical four level system.} (a) the state space of the stabilizer qubit (in green) is embedded in the state space of the qubit, given by the grey ball; (b) the state space of the stabilizer qubit (in green) is embeddable within a tetrahedron (in grey), the state space of a classical four level system; (c) the embedding of the stabilizer qubit into the tetrahedron pictured here is such that any valid effect on the stabilizer qubit is also a valid effect on the tetrahedron. This requirement prevents, for example, the embeddability of the qubit into the tetrahedron (we know that the qubit is nonembeddable because it is contextual). Here the orange planes represent the effect corresponding to the $+1$ outcome  of the Pauli $Z$ measurement: the upper plane intersects all states giving probability $1$ for the outcome $+1$, the central plane those giving probability $\frac{1}{2}$ and the lower plane those giving probability 0.}
\label{fig:stabilizer_qubit}
\end{figure*}

The obstacle to applying this directly to our experiment is that we do not know the probabilities $p(a|P,M)$, but only experimentally determined approximations, namely the frequencies of occurrence of the outcomes (depicted in Figure~\ref{fig:frequency_data}). Hence, we cannot directly write down the GPT that describes our superconducting system, but have to estimate the GPT from the experimental data. We do so by modifying and generalizing the method of \textit{theory-agnostic tomography}~\cite{Mazurek2021,Grabowecky2022}, which leads to the following multi-step procedure.

Let us first discuss the procedure for a fixed value of waiting time $\tau$; say, $\tau=0$. We start by collecting the experimental statistics for a large set of possible preparations $P_i$ and two-outcome measurements $M_j$, with $i\in\{1,\ldots,m\}$ and $j\in\{1,\ldots,n\}$. The goal is to estimate the conditional probabilities $p(0|P_i,M_j)$ for obtaining outcome $a=0$, given preparation $P_i$ and measurement $M_j$. We organize the collection of statistics in the form of an $m \times n$ matrix $D$ as:
\begin{equation}\label{eq:D}
    D=\left(\begin{array}{cccc}
        p(0|P_1,M_1) & p(0|P_1,M_2) & \cdots & p(0|P_1,M_n) \\
        p(0|P_2,M_1) & p(0|P_2,M_2) & \cdots & p(0|P_2,M_n) \\
        \vdots & \cdots & \ddots & \vdots \\
        p(0|P_m,M_1) & p(0|P_m,M_2) & \cdots & p(0|P_m,M_n)
    \end{array}\right).
\end{equation}
In an actual experiment, the conditional probabilities $p(0|P_i,M_j)$ are estimated by running the experiment $N$ times. Thus, when talking about experimental data, instead of the matrix $D$ we will refer to a matrix $F$ containing the observed frequencies $F_{ij}=f(0|P_i,M_j)$. {i.e.,} $F$ is a frequency table. Given that there are $N$ runs for each pair of preparation and measurement $(P_i, M_j)$, we have $F_{ij} = \frac{N_{ij}}{N}$, where $N_{ij}$ is the number of outcomes $0$ observed for the pair $(P_i, M_j)$. In the case where a large number $M$ of frequency tables $\{F^q\}_{q=1}^M$ are obtained, one can directly compute the variance of a frequency table $F^q$ as $(\Delta F_{ij}^q)^2 = (F_{ij}^q - \bar F_{ij})^2$, where $\bar F_{ij} = \frac{1}{M} \sum_{q =1}^M F_{ij}^q$.  In the case where one does not have a large number of frequency tables, we can estimate $(\Delta F_{ij})^2$ by making the following assumptions~\cite{Mazurek2021}.
We expect that in the limit of $N\rightarrow \infty$, these frequencies converge to the conditional probabilities $p(0|P_i,M_j)$,  under the assumption of ideal repeatability disregarding drift and other imperfections. In this case, the variable $N_{ij}$ is the number of $0$ outcomes of $N$ independent events in a sequence of experiments. This can be modelled as binomial distribution with $N$ events and probability given by the frequency $F_{ij} = \frac{N_{ij}} N$. The variance in $N_{ij}$ is therefore $ \frac{N_{ij}(N-N_{ij})}{N}$, which implies that  the variance in the frequency is therefore $(\Delta F_{ij})^2 = \frac{N_{ij}(N- N_{ij})}{N^3}= \frac{F_{ij}(N-N_{ij})}{N^2}=\frac{F_{ij}(1-F_{ij})}{N}$.

\begin{figure}[hbt]
\centering 
\includegraphics[trim=0 0 0 0,clip, width=\columnwidth]{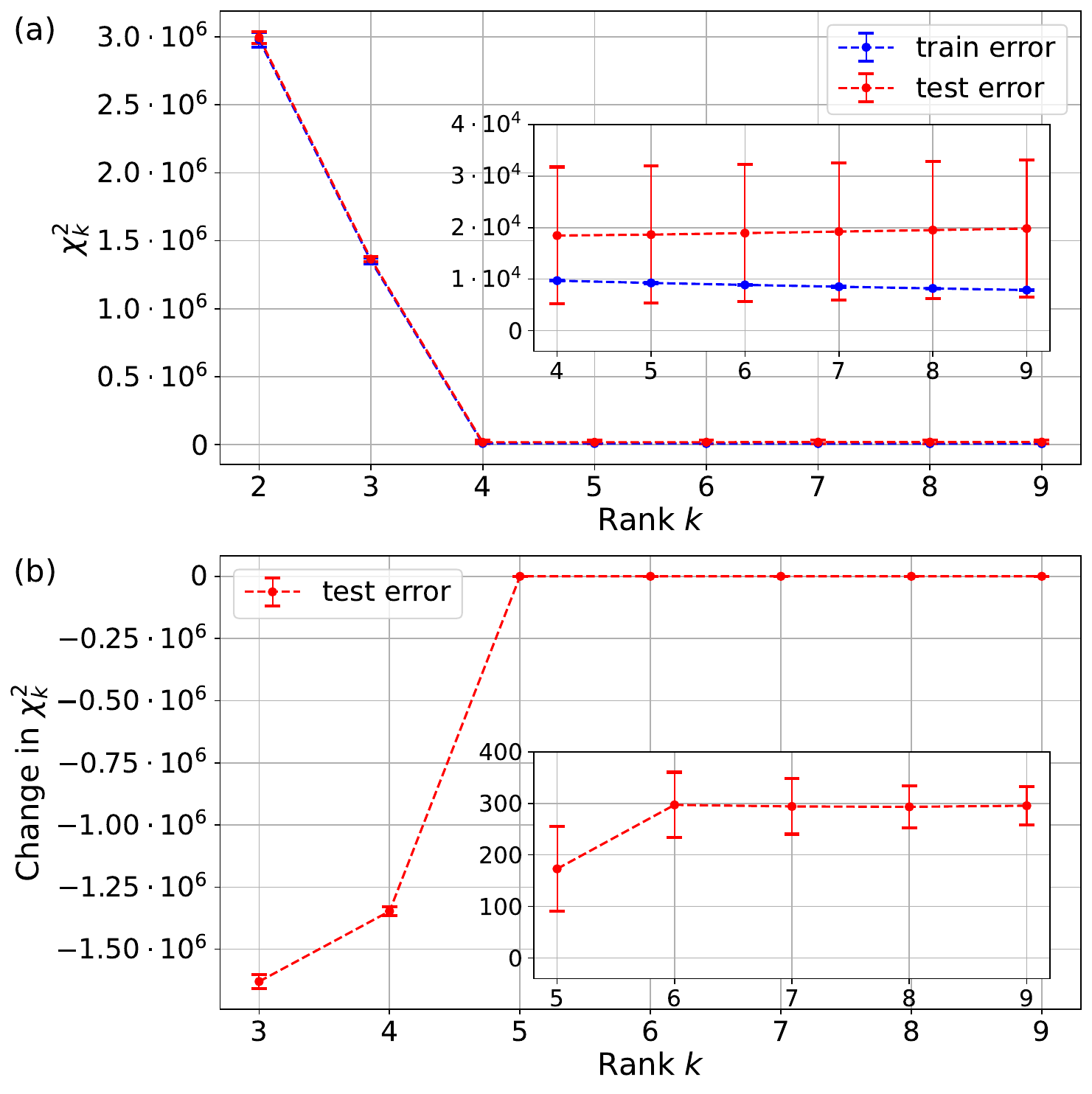}
\caption{\textbf{Errors of the optimal GPT fits for different ranks.} (a) {Test and train errors for the optimal GPT fits for ranks $k \in \{2,...,9\}$.} Inset: zoomed in for ranks $k\geq4$. The training error is evaluated on 10 frequency tables $\{F^\alpha\}_{\alpha =1}^{10}$, whilst the test error is evaluated on the 90 pairs of frequency tables $(F^\alpha,F^\beta)$ with $\alpha \neq \beta$. The error bars for the training error and test error are given by the statistical uncertainty over the 10 tables and 90 pairs respectively. (b) For each pair $(F^\alpha, F^\beta)$ with $\alpha,\beta \in \{1,...,10\}$, $\alpha \neq \beta$ with $F^\alpha$ serving as the training data and $F^\beta$ as the testing data we plot $\chi^2_k(F^\beta, D_k^\alpha) - \chi^2_{k-1}(F^\beta, D_{k-1}^\alpha)$ which is the change in the test error between the best fit rank $k$ and best fit rank $k-1$ models. For $k \leq 4$ it is strictly negative, showing that the test error decreases, while for $k >4$ it is strictly positive, showing that the test error increases.}
\label{fig:OverUnderFittingRank}
\end{figure}

The matrix $D$ contains all relevant statistical information about the prepare-and-measure experiment. For every GPT, the probabilities $p(0|P_i,M_j)$ are given by $\langle  \vec e_j, \vec s_i\rangle$, where the effects $\vec  e_j$ describe the measurement and its outcome, and the states $ \vec s_i$ describe the preparation procedures. States are elements of some real vector space of unspecified dimension $k$ and effects elements of the dual space, with $\langle \ , \ \rangle$ denoting the natural pairing, i.e.\ the application of the covector $\vec e_j$ to the vector $\vec s_i$, yielding a real number. The sets of all possible states and effects define the GPT and determine its information-theoretic and physical behavior. Our goal now is to find the GPT state and effect spaces which best fit for $D$ while minimizing the number of parameters. That is, we want to find a fitting GPT model which assigns a $k$-dimensional state vector $\vec{s}_i$  to each preparation $P_i$ and a $k$-dimensional effect vector $\vec e_j$ to each measurement outcome $0$ of $M_j$. Thus, the $m \times k$ matrix $S = (\vec{s}_1, \vec{s}_2, \ldots, \vec{s}_m)^\top$ and the $k \times n$ matrix $E = (\vec{e}_1, \vec{e}_2, \ldots, \vec{e}_n)$ can be used to factorize the data table $D = SE$, resulting in an $m \times n$ matrix of rank $k$, this time with entries $D_{ij} = \vec s_i^\top \vec e_j$ corresponding to the probabilities $p(0|P_i, M_j)$ predicted by the GPT. 

To perform theory-agnostic tomography in our experiment, we first have to determine the appropriate rank $k$. To this end, we fit a rank $k$ matrix $D = SE$ to the experimentally obtained frequency table $F$ for various values of $k$. The best rank $k$ matrix $D$ is determined via minimization of the weighted $\chi_k^2$ in the following optimization problem~\cite{Mazurek2021}:
\begin{equation}\label{eq:chisquared}
\begin{array}{rl}
   \min\limits_{D \in M_{mn}} &  \chi^2_k=\sum\limits_{ij} \frac{(F_{ij}-D_{ij})^2}{\Delta F_{ij}^2} \\
    \textrm{ subject to} & \textrm{rank}(D)=k,\, 0\leq D_{ij}\leq 1.
\end{array}
\end{equation}
In the Methods Subsection~\ref{sec:optimization_prob}, we describe how we solve this nonconvex optimization problem numerically, following the method presented in~\cite{Mazurek2021}.

The experimental data consists of 10 frequency tables $\{F^\alpha\}_{\alpha =1}^{10}$ for the 100 preparations and measurements specified in~\Cref{sec:experimental_setup}. For every $k \in \{2,...,9\}$, the average optimal $\chi^2_k$ over $F^\alpha$ is plotted in blue in~\Cref{fig:OverUnderFittingRank}~(a). It decreases sharply for $k <4$ and  decreases slowly for $k \geq 4$. The large $\chi_k^2$ values for $k<4$ indicate that the models $D_k$ have too few parameters to properly account for the regularities in the data; they {underfit} the data. For $k >4$, the $\chi^2_k$ decrease at a slower rate as the extra parameters in the models fit to the statistical fluctuations in the frequency tables. The inflection point at $k = 4$ indicates that this is likely the rank of the underlying `true' GPT generating the probability table $D$ of~\Cref{eq:D}. 

Fitting to statistical fluctuations is a signature of {overfitting} of a model. Following~\cite{Grabowecky2022}, we can detect overfitting of a model by evaluating the optimal rank $k$ fit $D_k$ obtained from a frequency table $F^\alpha$ on a different frequency table $F^\beta$. $F^\alpha$ is known as the \textit{training data} (since it is used to train the model $D_k$) and $F^\beta$ is known as the \textit{test data} (since it is used to evaluate the model $D_k$). Since the statistical fluctuations of $F^\alpha$ and $F^\beta$ are independent, the more a model $D_k$ fits the statistical fluctuations in $F^\alpha$, the worse it will perform when evaluated on $F^\beta$.  Given a training data/test data pair $(F,F')$ the test error relative to a model $D$ is
\begin{align}\label{eq:chisquared_test}
   {\chi^2_k}^{\textrm{test}} =\sum\limits_{ij} \frac{(F_{ij}'-D_{ij})^2}{(\Delta F_{ij}')^2} . 
\end{align}
The test error is expected to be high for models which underfit the data (since they have too few parameters to adequately capture the structure of the data), and expected to decrease as $k$ approaches the true underlying rank $k_{\rm{true}}$. For $k > k_{\rm{true}}$ the test error is expected to increase.  As such, the test error is expected to be minimal for the value of $k$ which neither underfits nor overfits the data, which is expected to be $k = k_{\rm{true}}$. 

\Cref{fig:OverUnderFittingRank}~(a) shows the average test error for each $k \in \{2,...,9\}$ evaluated for every pair $(F^\alpha,F^\beta)$ ($\alpha \neq \beta)$, where $F^\alpha$ is the training data and $F_\beta$ is the test data. 
The large statistical variances in the test error in~\Cref{fig:OverUnderFittingRank}~(a) do not allow us to infer that the test error reaches its minimum at rank $4$. In~\Cref{fig:OverUnderFittingRank}~(b), we plot
the average difference in test error $\chi^2_k(F^\beta, D_k^\alpha) - \chi^2_{k-1}(F^\beta, D_{k-1}^\alpha)$. The error bars are sufficiently small to allow us to conclude that the change in test error is negative for $k\leq 4$ and positive for $k>4$, allowing us to conclude that the minimal test error occurs for $k = 4$, and thus that the rank of the underlying true GPT is likely $k_{\rm{true}} = 4$.

\begin{figure*}[t]
\centering 
\includegraphics[trim=0 1065 10 80,clip, width=1.0\linewidth]{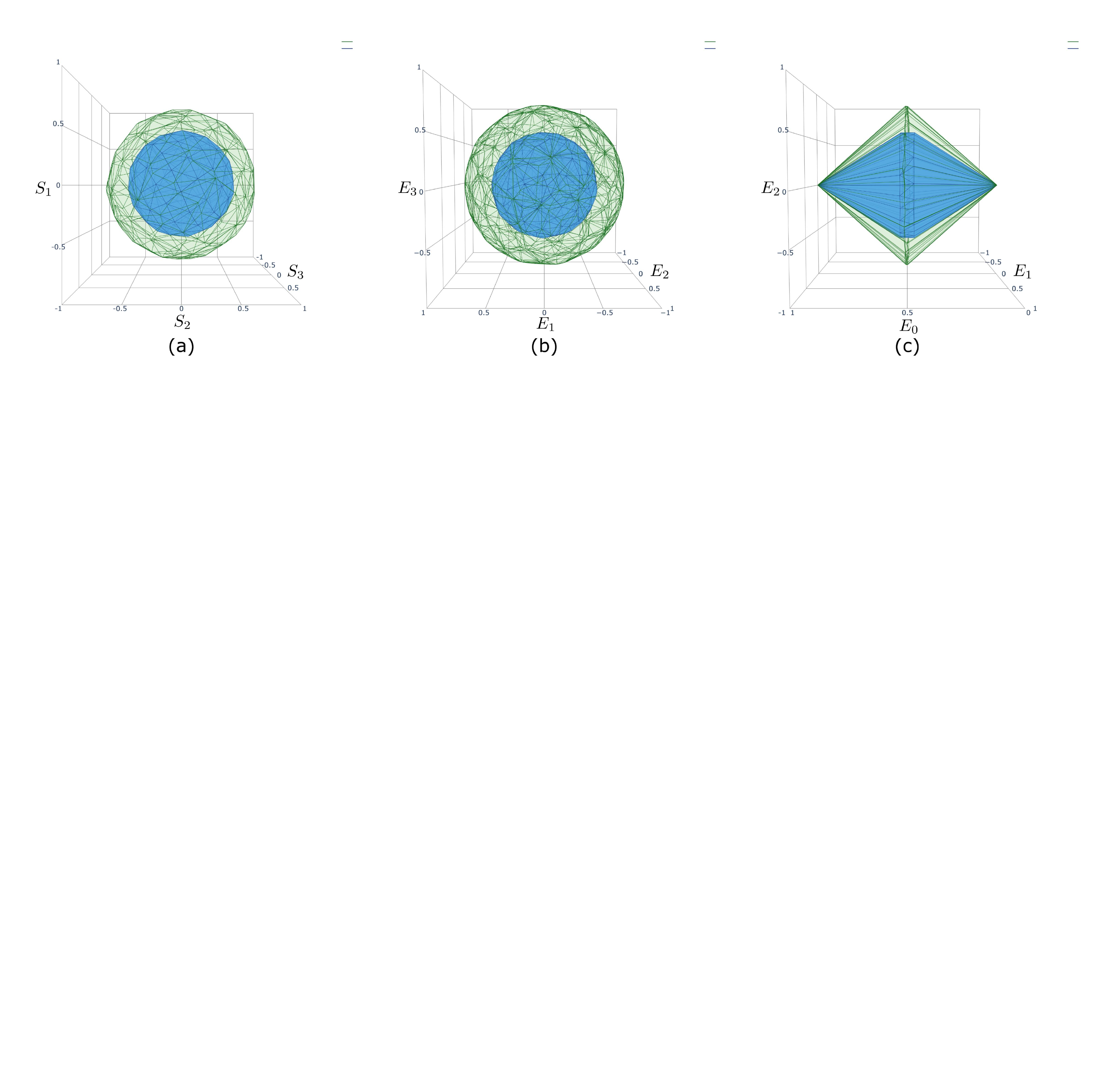}
\caption{\textbf{Reconstructed state and effect spaces for  zero time delay}. Plots showing the realized (blue) and consistent (green) spaces for the rank $k=4$ GPT for $\tau=0$. (a) The normalized state spaces are three-dimensional and similar to Bloch balls; (b) effect space projection onto dimensions 1,2,3; (c) effect space projection onto dimensions 0,1,2 (or similarly for 0,2,3 and 0,1,3) containing the zero and unit effects $0$ and $\mathbf{1}$.}
\label{fig:rank4stateseffects}
\end{figure*}

\subsection{Dynamical analysis: decoherence}
To determine the dimension of the GPT system, we restrict our attention to the $\tau=0$ data. In \Cref{fig:OverUnderFittingRank}~(a), we have computed the optimal models  fitting the experimental data $F$ for different rank $k$ candidates with $k\in\{2,\ldots,9\}$. In particular, following Eq.~\eqref{eq:chisquared}, we minimize $\chi^2_k$ for each candidate $k$ obtaining the best-fit GPT models of rank $k$ to the obtained experimental data.

We have seen that solving Eq.~\eqref{eq:chisquared} provides a model with the best-fit of rank $k$ to the probability matrix $D$ sampling the observed measurement outcomes. Moreover, we have observed that rank $k=4$ provides the best estimate of an underlying model describing the experiment. This could correspond to a qubit description as expected: the dimension of the set of normalized states will then be $k-1=3$, which coincides with the dimension of a qubit's Bloch ball state space.

Let us now characterize the state and effect spaces for the GPT model of rank $k=4$ which generate $D$. To obtain $D$, we have split the problem into estimating the realized GPT states ${\mathcal{S}}$ and effects ${\mathcal{E}}$. The decomposition $D=SE$ is not unique: for every invertible $k\times k$-matrix $L$, we also have $D=(SL)(L^{-1}E)$, and all possible decompositions are of this form, see Methods Subsection~\ref{SubsecUniqueness}. Following~\cite{Mazurek2021}, we choose a parametrization where the first component of a state denotes its normalization, and so the first column of $S$ consists entirely of ones. Furthermore, the first column of $E$ is chosen to be the unit (normalization) effect. All further freedom in choosing $L$ only leads to a different linear reparametrization of the resulting GPT, and this does not change any of its physical properties (see the notion of ``equivalent GPTs'' in Methods~ Subsection~\ref{MethodsGPTs}).

Figure~\ref{fig:rank4stateseffects} (a) shows the resulting state space in blue, and Figures~\ref{fig:rank4stateseffects} (b) and (c) show projections of the effect space in blue for one specific parametrization (see Methods Subsection~\ref{MethodsGPTs}). As expected, $\mathcal{S}$ resembles a quantum bit Bloch ball, and $\mathcal{E}$ resembles the set of Hermitian $2\times 2$ operators $E$ with eigenvalues in $[0,1]$ (which is $(k=4)$-dimensional). However, in contrast to a perfect quantum bit, $\mathcal{S}$ and $\mathcal{E}$ are polytope and not exact duals of each other since only finitely many preparations and measurements have been implemented. That is, given our realized set of effects $\mathcal{E}$, we can consider the set $\mathcal{S}_{\rm consistent}$ of all possible vectors $ s$ which give valid probabilities for all measurements, i.e.\ $0\leq \langle \vec e, \vec s\rangle\leq 1$ for all effects $ e\in\mathcal{E}$. Then $\mathcal{S}\subsetneq \mathcal{S}_{\rm consistent}$. Similarly, given $\mathcal{S}$, we can consider the set $\mathcal{E}_{\rm consistent}$ of all possible covectors $ \vec e$ that give valid probabilities on all states, and $\mathcal{E}\subsetneq\mathcal{E}_{\rm consistent}$. These sets are shown in green in Figure~\ref{fig:rank4stateseffects}.

For the sake of the argument, assume for a moment that the no-restriction hypothesis holds~\cite{Chiribella2010}, as predicted e.g.\ by quantum theory: all possible outcome probability rules are implementable effects, and all possible vectors yielding valid probabilities on the possible measurements are implementable states. In the hypothetical (but unrealistic) case that the measurements have been perfectly noiseless and all possible measurements have actually been implemented, this would imply that $\mathcal{S}_{\rm consistent}$ describes the set of all possible states of the system, and $\mathcal{S}$ being a strict subset of this means that the preparations have been inherently noisy. Realistically, both preparations and measurements are noisy, but a theory-independent analysis cannot tell us whether the error occurred on the level of the former or the latter. All our analysis can tell us is that the no-restriction hypothesis is violated by the experimentally realized GPT system, and the considerations below allow us to quantify how strong the violation is.

Consider the distinguishability of two states $\vec s$ and $ \vec s'$, defined as $\mathcal{D}( \vec s, \vec s'):=\max_{ \vec e\in\mathcal{E}}|\langle  \vec e, \vec s\rangle - \langle  \vec e, \vec s'\rangle|$. For ideal quantum systems, this equals the trace distance between the two density matrices that represent the states, $\frac 1 2\|\rho_{ \vec s}-\rho_{ \vec s'}\|_1$. For any given state $ \vec s$, consider the function $f( \vec s):=\max_{ \vec s'\in\mathcal{S}} \mathcal{D}( \vec s, \vec s')$, then $0\leq f( \vec s)\leq 1$, and this equals $1$ if and only if $ \vec s$ is perfectly distinguishable from some other state. As shown in~\cite{Chiribella2011}, the no-restriction hypothesis implies that $f(\vec  s)=1$ for all boundary points $\vec s$ of the state space, the ``perfect distinguishability'' axiom of~\cite{Chiribella2011}. In our case, this function takes values in between $0.681 \pm 0.003$ and $0.691 \pm 0.004$, whose minimum on the boundary points gives us a theory-independent quantity saying ``how restricted'' the effective GPT system is. Under the additional assumption that quantum theory holds, and that the superconducting system is fundamentally described by a qubit, this tells us that there must have been states $\rho_{ \vec s}$ prepared (i.e.\ $ \vec s\in\mathcal{S}$) that are close to pure, in the sense that their largest eigenvalue is $\lambda_{\max}(\rho_{ s}) \geq 0.846\pm 0.002$ (and, assuming approximate rotational symmetry as apparent in Figure~\ref{fig:rank4stateseffects}, all prepared states for $\tau=0$ should have this property approximately); see Methods Subsection~\ref{SubsecFidelity}. Note that $\lambda_{\max}(\rho_{ \vec s})=\langle \psi|\rho_{ \vec s}|\psi\rangle$, i.e.\ the fidelity between the prepared state $\rho_{ \vec s}$ and its eigenstate $|\psi\rangle$ corresponding to its largest eigenvalue.

Let us now turn to the task of monitoring the time evolution of our superconducting system, via snapshots for different evolution times $\tau$. We modify and extend the method of theory-agnostic tomography as follows. For a finite set of $\{\tau_0, \tau_1,... \tau_T\}$, we can obtain the frequency tables $\{F^\tau\}$. Since the measurements for each $F^\tau$ are treated to be the same by convention (recall Figure~\ref{fig:prepare-and-measure}),  we can write a total frequency table $F$, which is a $m(T+1)\times n$ matrix: 
\begin{align}\label{eq:stacked_freq}
    F = \begin{pmatrix}
        F^{\tau_0} \\
        F^{\tau_1} \\
        \vdots  \\
        F^{\tau_T}
    \end{pmatrix}
\end{align}
The probability table for a rank $k$-GPT in this set of prepare-and-measure experiments with shared measurements is
\begin{align}
    D = \begin{pmatrix}
        D^{\tau_0} \\
        D^{\tau_1} \\
        \vdots  \\
        D^{\tau_T}
    \end{pmatrix} = 
    \begin{pmatrix}
        S^{\tau_0} \\
        S^{\tau_1} \\
        \vdots  \\
        S^{\tau_T}
    \end{pmatrix} E = S E , 
\end{align}
with $S$ an $m(T+1) \times k$ matrix and $E$ a $k \times n$ matrix. 

\begin{figure*}[t]
\centering 
\includegraphics[trim=0 900 10 0,clip, width=1.0\linewidth]{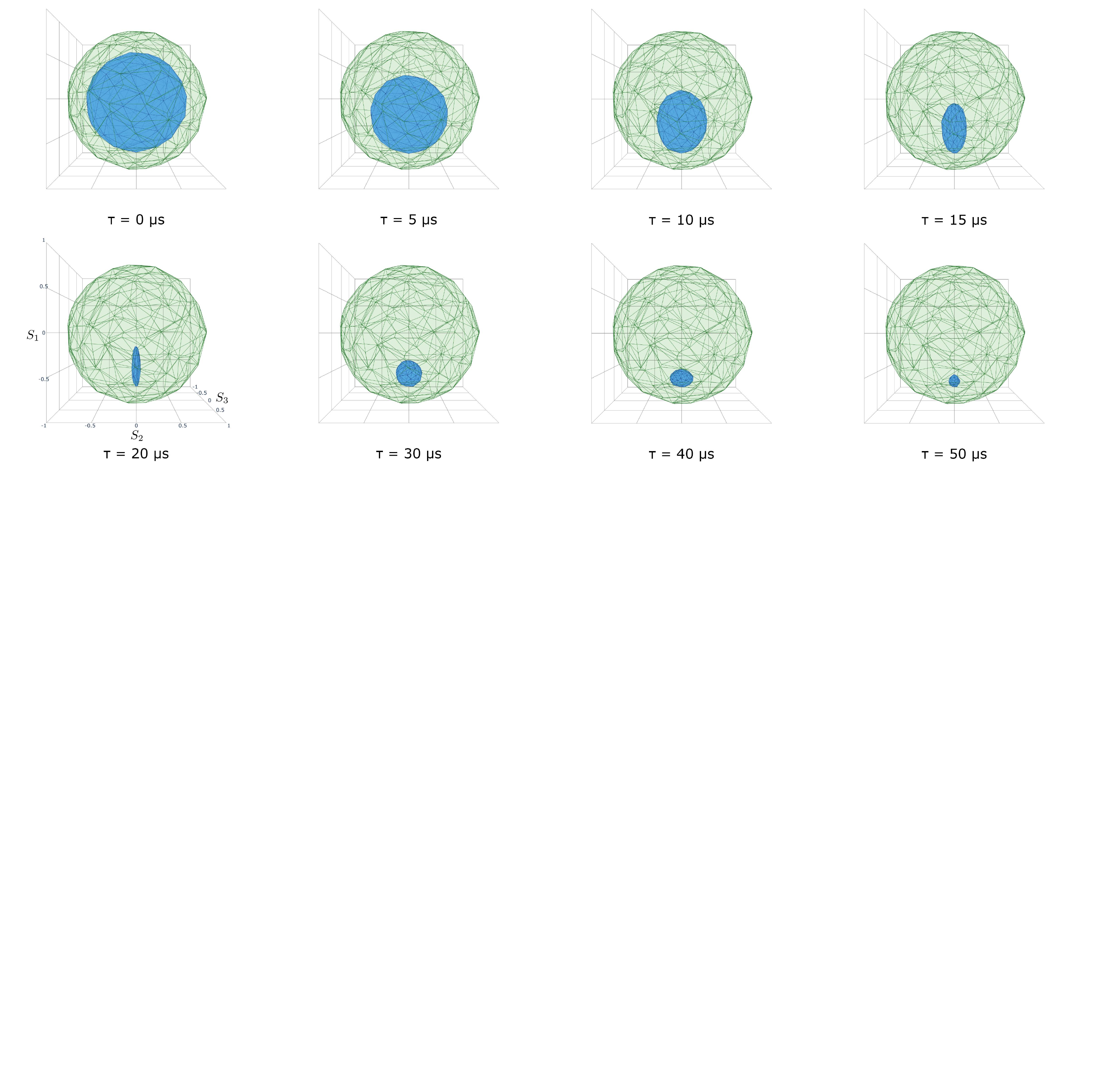}
\caption{\textbf{Evolution of the state space for different delay times.} Plots of the consistent state space $\mathcal{S}_{\rm consistent}$ (green) for all preparations, and realized state spaces $\mathcal{S}^\tau$ (blue) for varying wait times $\tau = [0,5,10,15,20,30,40,50]$ $\mu$s.}
\label{fig:decoherence_plots}
\end{figure*}

The problem of finding the best rank-$k$ GPT fit for the sequence of prepare-and-measure scenarios $\{(\cP^{\tau_0}, \cm),( \cP^{\tau_1}, \cm), ... , (\cP^{\tau_T},\cm)\}$ is thus equivalent to the task of finding the best GPT rank-$k$ GPT fit for $m(T+1)$  preparations and $n$ measurements $\{\cP' , \cm\}$, where $\cP' = \bigcup_i \cP^{\tau_i}$.  

The result is shown in Figure~\ref{fig:decoherence_plots}. From our knowledge of quantum physics and the experimental platform, we expect two effects to be in place: decoherence and relaxation to the ground state. Both effects are visible in the theory-independent representation. For every GPT, reversible transformations are represented by linear symmetries of the state space. Hence, under reversible time evolution, $\mathcal{S}$ is preserved. For $\tau\leq 20$ $\mu$s, we see however that this is not the case, and that the time evolution is manifestly irreversible, shrinking $\mathcal{S}$ towards more mixed states. For all times, we see that evolution moves the state space closer to a single distinguished state (potentially a pure state), which describes a relaxation process. Quantum physics intuition tells us that this should correspond to the superconducting qubit's ground state, but our theory-independent analysis does not allow us to conclude this with certainty.

\subsection{Contextuality and its loss under decoherence}
Now that we have a description of the different $\mathcal{S}^\tau$ and $\mathcal{E}^\tau$, we can check whether the corresponding GPT systems are noncontextual. To do so, we use the linear program of~\cite{Selby-LinearProgram} to determine whether the GPT systems are simplex-embeddable. Every system becomes noncontextual if a sufficient amount of noise is added. To quantify this, we use the state $\mu^\tau$ which is the average of all extremal states of $\mathcal{S}^\tau$ (think of it as the maximally mixed state in the center of $\mathcal{S}^\tau$), and we consider depolarizing noise that replaces every given state by $\mu^\tau$ with probability $r$. The resulting state space $\mathcal{S}_r^\tau:=(1-r)\mathcal{S}^\tau+ r\mu^\tau$ will be noncontextual if $r$ is large enough. In Figure~\ref{fig:RobustnessOfEmbedding}, we show how large $r$ has to be chosen such that $\mathcal{S}^\tau_r$ is noncontextual. If the answer is $r>0$, then the system at time $\tau$ is contextual, and otherwise, if $r=0$, it is noncontextual. The uncertainties in~\Cref{fig:RobustnessOfEmbedding} (resp.~\Cref{fig:VolumeTau}) are given by the standard deviations in the robustness (resp. relative volume) computed from $7$ frequency tables $\{F^\alpha\}_{\alpha = 1}^7$ of the form given in~\Cref{eq:stacked_freq}.

\begin{figure}
\centering 
\includegraphics[trim = 0 0 0 0, clip, width=\columnwidth]{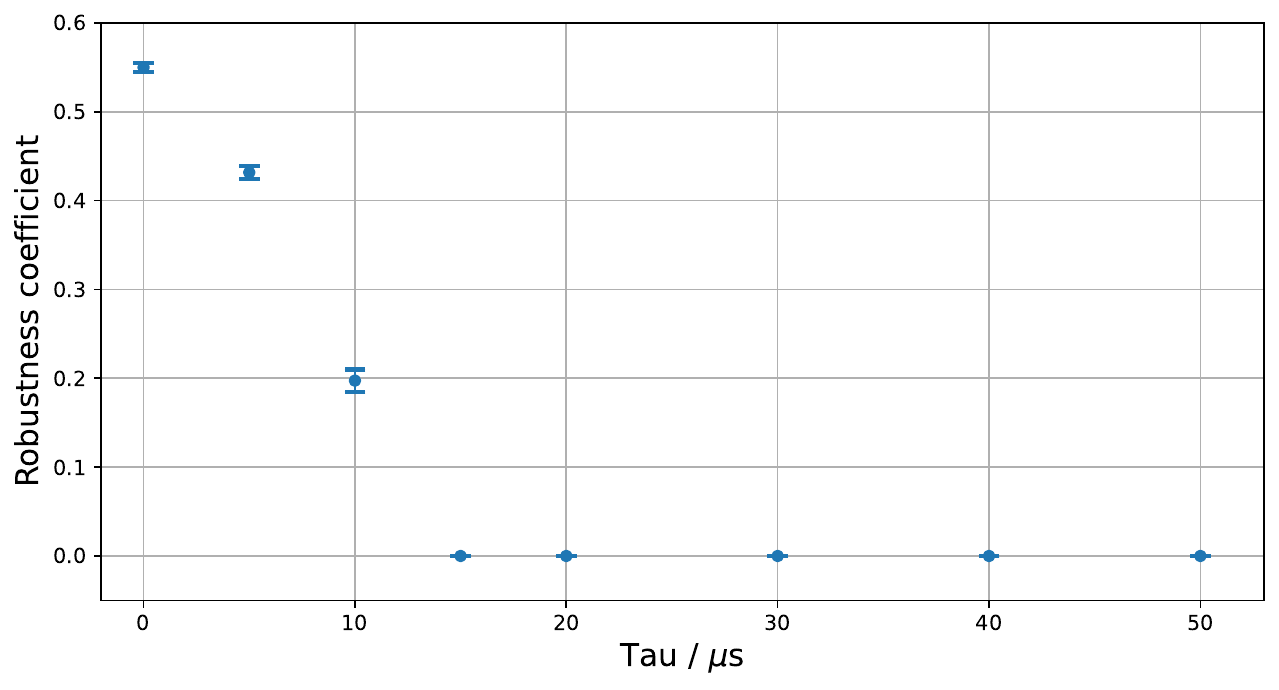}
\caption{\textbf{Degree of contextuality of the GPT systems for different delay times.} Plot showing the robustness coefficient, i.e.\ the necessary amount of depolarizing noise $r$ that has to be added such that a noncontextual ontological model exists for the GPT system, tested for increasing delay time $\tau$ (with 7 runs per value of $\tau$, with the standard deviation then determining error bars). For robustness $r=0$, the superconducting system is noncontextual, and otherwise contextual.}
\label{fig:RobustnessOfEmbedding}
\end{figure}

We see that the system is initially contextual, but that it looses its contextuality between the times of $\tau_2=\unit{10}{\mu s}$ and $\tau_3=\unit{15}{\mu s}$.
In other words, the decoherence process leads to the system becoming effectively classical and remaining so. Note that the error bars around $r=0$ are zero, because small perturbations of noncontextual systems are typically noncontextual too. Indeed, we have obtained $r=0$ for $\tau\geq \unit{15}{\mu s}$ in all repetitions.

It is well-known that the quantum bit is noncontextual in the sense of Kochen and Specker~\cite{KochenSpecker}, and that it admits of simple hidden-variable models, including one that has already been given by Bell~\cite{Bell1965}. However, as shown by Spekkens~\cite{Spekkens2005}, all such models must be contextual according to the generalized notion introduced in Subsection~\ref{subsec:theoryindependent_analysis}. As we have demonstrated above, the same is true for the ``noisy qubits'' that describe the superconducting system for sufficiently small evolution times.

\subsection{Non-Markovianity at late times}
In Figure~\ref{fig:decoherence_plots}, it can be seen that the state space shrinks during all time steps, except between $\tau_4=\unit{20}{\mu s}$ and $\tau_5=\unit{30}{\mu s}$, where it appears to expand. This is a signature of non-Markovianity. Let us define what this means for arbitrary GPTs, including but not restricted to quantum theory. We say that a system $S$ has Markovian time evolution from time $\tau_0$ to $\tau_1>\tau_0$ if there is a transformation $T$, defined on the system $S$ alone, such that $ \vec s_S(\tau_1)=T\,\vec s_S(\tau_0)$, with $\vec s_S(\tau)$ the state of $S$ at time $\tau$ (in QT; this would be a trace-preserving completely positive map). This requirement is automatically fulfilled if we are able to prepare the system at time $\tau_0$ in some arbitrary state of $\mathcal{S}$, uncorrelated with other systems and the environment. Mathematically, the corresponding transformation $T$ will then be a linear map on the set of unnormalized states (a well-known fact for GPTs). Thus, it must act as an affine-linear map on the set of normalized states $\mathcal{S}$. Since it maps states to valid states, it holds that $T(\mathcal{S})\subset \mathcal{S}$, and so it must be volume-non-increasing (see Methods Subsection~\ref{SubsecDet} for details).

On the other hand, if there are initial correlations between the system $S$ and its environment $E$ at time $\tau_0$, then initial state $\vec s_{SE}$ and final state $\vec s'_{SE}$ are related by some transformation {of the total system $SE$}, $\vec s'_{SE}=T_{SE}\vec s_{SE}$. Thus, there will {not} in general be a transformation (or other linear map) $T$ that relates the possible initial and final marginal states $\vec s'_S$ and $\vec s_S$. In the quantum case, this fact has been pointed out by Schmid et al.~\cite{SchmidRiedSpekkens}. Intuitively, this sort of non-Markovianity happens if there is some information backflow from the environment to the system.

\begin{figure}[t]
\centering 
\includegraphics[trim = 0 0 0 0, clip, width=\columnwidth]{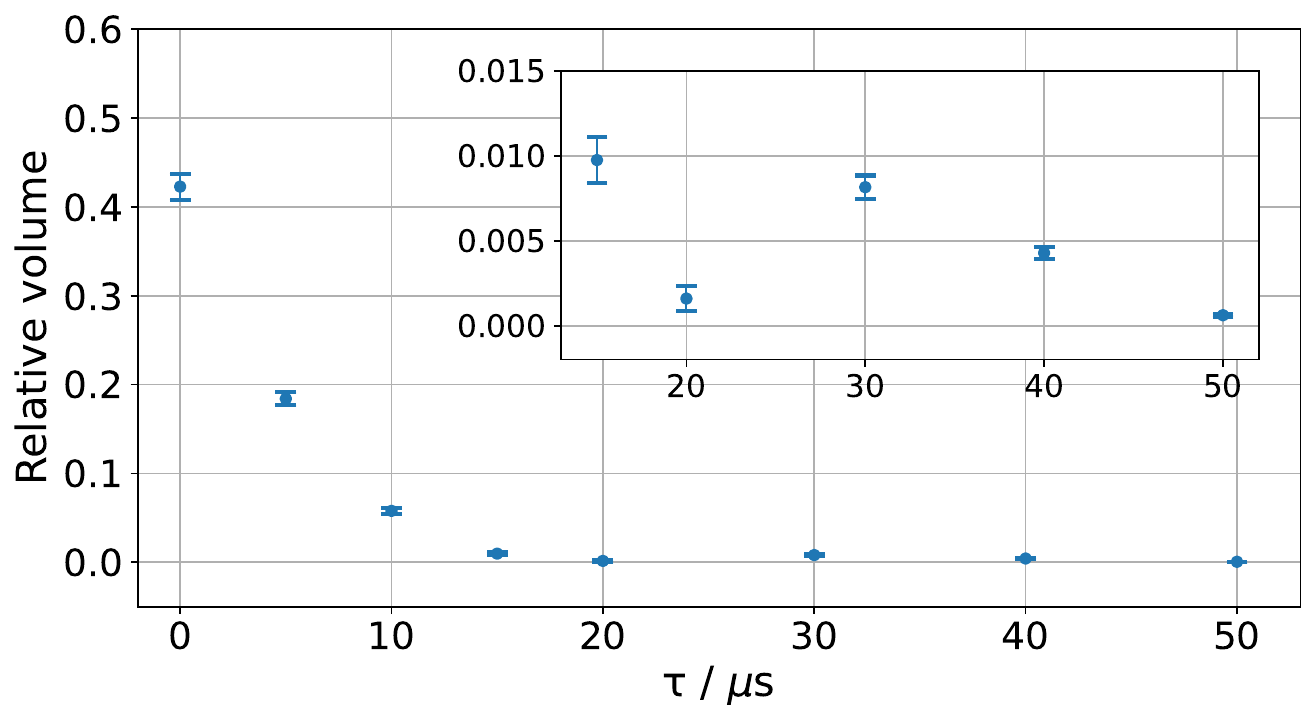}
\caption{\textbf{Signature of non-Markovian time evolution at late times.} Numerically determined relative volumes of the realized to consistent state spaces $\text{Vol}(\mathcal{S}^\tau)/\text{Vol}(\mathcal{S}_{\rm consistent})$ for different waiting times $\tau$, again with the standard deviation determining error bars. The volume always decreases, as required by Markovian time evolution, except between $\tau_4=20$ $\mu$s and $\tau_5=30$ $\mu$s.}
\label{fig:VolumeTau}
\end{figure}

We have numerically determined the volumes of the state spaces $\mathcal{S}^{\tau}$ (see Figure~\ref{fig:VolumeTau}), and see that the volume increases between times $\tau_4=\unit{20}{\mu s}$ and $\tau_5=\unit{30}{\mu s}$, as is visible in Figure~\ref{fig:decoherence_plots}. In particular, for each $\tau$, we calculated the volume of the realized state space $\text{Vol}(\mathcal{S}^\tau)$ relative to the volume of the consistent state space $\text{Vol}(\mathcal{S}_{\rm consistent})$. The observation that the relative volume increases between $\tau_4$ and $\tau_5$ proves that there is non-Markovian evolution during this time interval. This is physically not unexpected, since the transmon is only approximately a qubit, and higher-lying levels cannot be perfectly ignored. For the experiment we consider, the non-Markovianity may thus be due to the residual off-resonant coupling with an additional degree of freedom, such as another qubit hosted in the same readout cavity resonator. On the other hand, successful state preparation at time $\tau_0=0$ implies that time evolution from time $\tau_0=0$ to any other time $\tau_1=\tau$ must always be Markovian, and so ${\rm Vol}(\mathcal{S}^0)\geq{\rm Vol}(\mathcal{S}^\tau)$ for all $\tau\geq 0$, which is confirmed by our data.

We fit an exponential decay function $A e^{-\frac{\tau}{B}}$ to the relative volumes for different $\tau$ (weighted by the uncertainties in the relative volume), obtaining values $A = 0.45 \pm 0.01$ and $B = 0.227 \pm 0.004 \mu s^{-1}$. This provides a theory independent estimate of the relaxation rate $\frac{1}{B} = 4.409 \mu s$.

The two phenomena discussed in this and the previous subsection have very different roles to play: generalized contextuality is a precious resource that is hard to obtain; non-Markovianity, on the other hand, is a frequent statistical phenomenon which
is often detrimental rather than desirable. What we claim to be remarkable is not that there is non-Markovianity in the experiment, but that we are able to demonstrate it as a property of the system's time evolution {without assuming the validity of quantum theory}.

\section{Discussion}

Theory-agnostic tomography for preparations and measurements was introduced in~\cite{Mazurek2021}, where it was applied to a photonic qubit, and subsequently in~\cite{Grabowecky2022} to a photonic qutrit. The present work makes use of the methods of~\cite{Mazurek2021,Grabowecky2022} to construct the state and effect space of the best fit GPT model of the superconducting qubit. 

Following~\cite{Grabowecky2022}, we determine the best rank fit as the one which least overfits the data, requiring fewer assumptions than the AIC criterion used in~\cite{Mazurek2021}. However, unlike~\cite{Grabowecky2022}, the 10 different frequency tables obtained provide us with 90 pairs of training and test data, allowing us to give error bars for the best fit $\chi^2$  values for each dimension.

In the present work, we extend theory-agnostic tomography of~\cite{Mazurek2021,Grabowecky2022} to include a form of process tomography. We use this for theory-independent monitoring of decoherence, and to obtain a theory-independent witness of non-Markovianity. We also monitor the time evolution of generalized contextuality of the superconducting qubit, showing that it is contextual for times $\tau = 0, 5,10$ $\mu$s and non-contextual for times greater than $15$ $\mu$s.

An experimental demonstration of Kochen-Specker contextuality in a three-level superconducting system was shown in~\cite{Jerger}. The notion of generalized contextuality used in the present work is weaker than Kochen-Specker contextuality in that it does not require projective measurements (but rather applies to POVMs more generally) nor does it require outcome-determinism. Outcome-determinism is the requirement that outcomes of measurements depend deterministically on the underlying hidden variable $\lambda$. As argued in~\cite{Spekkens2005}, outcome-determinism is not entailed by noncontextuality, and thus should be viewed as an independent assumption in proofs of contextuality.

In addition to the assumption of outcome-determinism/projective measurements, a number of additional assumptions are needed in the demonstration of contextuality of~\cite{Jerger}, as specified in~\cite[Supplementary Material]{Jerger}. We briefly comment on these assumptions (1.(a), 1.(b), 2.\ and 3.) and how they relate to the present work.  Assumptions 2. and 3. are concerned with sequential measurements and compatibility of measurements, neither of which are needed in present work. Assumption 1.(a) concerns the probability of picking a particular context, which is not needed in this work either. Assumption 1.(b), which states that the probability distribution over the ontic states is the same in every run, is shared with the present work. 

As discussed in~\Cref{app:contextuality_app}, we make the assumption that the preparations and measurements implemented in the experiment are tomographic relative to one another, which is not needed in~\cite{Jerger}. The tomography loophole for experimentally demonstrating generalized contextuality can be viewed as analogous to the finite precision measurement loophole~\cite{Meyer_finite_1999} for experimental demonstrations of Kochen-Specker contextuality.

Finally, we briefly contrast the theory-independent approach based on GPTs used here to the device-independent (DI) approach~\cite{PironioScaraniVidick2016}, which can also used to certify nonclassicality using minimal assumptions. In both the DI approach and GPT tomography, the basic objects are frequencies/probabilities over outputs given some inputs, which together with some well-motivated principles form the basis of the analysis. We note that the DI approach is not necessarily theory-independent, for instance, self-testing protocols verify that a certain quantum state (up to a family of local transformations) has been prepared based solely on the Bell inequality violation of the observed statistics~\cite{MayersYao1998}. The DI approach typically uses assumptions about causal structure (motivated by special relativity), whereas the GPT tomography approach makes both a causal assumption ($\lambda$-mediation~\cite{CataniLeifer}) and a tomographic completeness assumption, which are not motivated by appeal to another physical theory.

To summarize, in this work we have monitored the decoherence of a physical system, its evolution of generalized contextuality, and the (non-)Markovianity of its time evolution in a theory-independent way. This means that we have not assumed the validity of quantum theory in any of the data analysis or the conclusions that we have drawn. All that we had to assume was that the experimental setup includes a physical system that is probed in a way such that the implemented preparations and measurements are tomographically complete for each other, which allowed us to use the GPT formalism that does not depend on any specific choice of theory.

Our analysis has shown us that our experimental data is best described by a GPT that has a four-dimensional space of states and effects (consistent with the description as a qubit). We have shown that the state space shrinks over time (decoherence), loses its contextuality (emergence of classicality), and undergoes non-Markovian evolution at late times. Since we do not need to assume quantum theory in the analysis of the experiment, these properties are demonstrated to hold irrespective of the theoretical description of the physical system. That is, even if quantum theory were to be overturned by another theory in the future, our conclusions would still hold, as long as our experiment and its analysis have been correctly performed, and our assumption of tomographic completeness (see Methods Subsection~\ref{app:contextuality_app}) is satisfied. This is to some extent similar to experimental demonstrations of the violation of Bell inequalities~\cite{Hensen,Giustina,Shalm}, which show the failure of local realism not only within quantum theory, but as a property of nature that will survive all future revisions of our theoretical description. While we do not claim the same sort of device-independence as for Bell experiments (mainly due to the assumption of tomographic completeness), we believe that our results represent a significant step forward towards the goal of a theory-independent reevaluation of experiments under minimal assumptions, which includes subjecting quantum theory to rigorous scrutiny.

It is instructive to be slightly more specific about what the GPT system really is that we have determined. Our experimental choice of 100 preparation and measurement procedures defines an operational theory, to which we have fitted an effective GPT system with state space $\mathcal{S}$ and effect space $\mathcal{E}$. As discussed in the Methods Subsection~\ref{SubsecDetermineConsistent}, the set of {all possible} preparations and measurements on the superconducting qubit is larger, with state and effect spaces $\mathcal{S}_{\rm phys}\supset \mathcal{S}$ and $\mathcal{E}_{\rm phys}\supset \mathcal{E}$. Due to our assumption of tomographic completeness, however, our effective GPT is, in the terminology of~\cite{Selby2023}, a \textit{fragment} of the physical GPT of the superconducting system. Hence, the proof of generalized contextuality for the former (for times $10$ $\mu$s or less) applies directly to the latter, and so does the proof of non-Markovianity and the observation that the state space contracts over time.

Our work raises several interesting questions. For example, what can such experiments tell us if we do not make any assumption of tomographic completeness, such as in cases where many-body systems are probed with coarsegrained or collective measurements only~\cite{Schmied}? Some progress has recently been made on this question~\cite{Schmid2024}, in particular by showing that a notion of \textit{relative tomographic completeness} is sufficient for demonstrating generalized contextuality, but there are still important open questions on how to apply this to the analysis of concrete experiments. Furthermore, can (standardly performed) measurements of the Wigner function, interpretable as large collections of preparation or measurement procedures, reveal generalized contextuality or other phenomena related to the GPT representation~\cite{Schmid_uniqueness_2022}? We believe that further theoretical analyses and experimental exploration may lead to interesting new insights into the foundations and practical certification of nonclassicality, as well as novel precision tests of quantum theory.

\section{Methods}

To complement the methods section, in the repository~\cite{code_and_data} we provide an open-source Python code and the actual data used in this paper.

\subsection{Generalized probabilistic theories}
\label{MethodsGPTs}

A GPT system (without transformations) is described by a convex set of (normalized) states $\cs \subset V$ ($V \simeq \Rl^d$) and a convex set of effects $\cale \subset V^*$ such that $0\leq \langle \vec e,\vec s\rangle\leq 1$ for all states and effects.  The set of effects contains the $\vec 0$ effect: $\langle \vec 0,\vec s\rangle = 0$ for all $\vec s \in \cs$ and the unit effect $\vec u$: $\langle \vec u,\vec s\rangle = 1$ for all $\vec s \in \cs$. The number $\langle \vec e,\vec s\rangle$ gives the probability that the measurement outcome represented by $\vec e$ occurs when the system is prepared in state $\vec s$. Given a set of effects $\{\vec e_1,..., \vec e_n\}$ and a set of states $\{\vec s_1,..., \vec s_m\}$, the associated probability table $D$ is a $m \times n$ matrix with entries $D_{ij} = \langle \vec e_j,\vec s_i\rangle$.
Two GPT systems  $(\cs , \cale)$ and $(\cs' , \cale')$ are \textit{equivalent}  if  exists an invertible linear map $L: V \to V'$ such that $L(\cs) = \cs'$ and $(L^{-1})^*(\cale) = \cale'$, where $T^*$ denotes to adjoint of a linear map $T$. Therefore two equivalent systems yield the same probabilities: $\langle \vec e', \vec s' \rangle = \langle (L^{-1})^*(\vec e), L(\vec s) \rangle = \langle \vec e, L^{-1} L(\vec s) \rangle = \langle \vec e, \vec s \rangle$.

As a simple example of a GPT system, consider a qubit in quantum theory. The normalized states of the qubit are the $2 \times 2$ density operators, which form a convex subset of the real vector space of Hermitian operators on $\Cl^2$.  The effects of a qubit are the positive semidefinite operators $E$ such that $0 \leq E \leq \I$, where $0$ is the zero effect and $\I$ the unit effect. The probability of effect $E$ given state $\rho$ is given by $\langle E,\rho\rangle=\Tr(\rho E)$.

The vector representation of a qubit is given by the Bloch representation. The state of a qubit can be written in the form $\displaystyle\rho = \frac{1}{2} \left(\I + \sum_{i = 1}^3 a_i \sigma_i\right)$, where the $\sigma_i$ are the Pauli matrices, which together with the identity $\I$ form a basis for the 4-dimensional real vector space of Hermitian operators on $\Cl^2$. Positive semidefiniteness of is equivalent to $\sum_i a_i^2 \leq 1$. Thus the states of a qubit can be equivalently expressed in Bloch vector form as
$\displaystyle \vec s_\rho = \frac{1}{2}(1,a_1,a_2,a_3)^\top$ with $\|a\|\leq 1$. The effects can be expressed in the same basis as a vector $\vec e_E=(c_0,c_1,c_2,c_3)^\top$, where $c_i=\frac 1 2{\rm Tr}(E\sigma_j)$. Hence the outcome probabilities are not w $p(0|P_i, M_j) = \Tr(\rho_i E_j ) = \vec s_i^\top \vec e_j$.
The full probability table can now be written as
  \begin{align}
      D = \begin{pmatrix}
          \vec s_1^\top \\
          \vec s_2^\top\\
          \vdots \\
          \vec s_m^\top
      \end{pmatrix}
      \begin{pmatrix}
          \vec e_1 & \vec e_2 & \hdots & \vec e_n
      \end{pmatrix} = 
          S E.
  \end{align}
Thus the $m \times n$ probability table $D$ for the qubit can be factored into a $m \times 4$ matrix $S$ and a $4 \times n$ matrix $E$. This implies that $D$ is a rank-4 matrix.

Another canonical example of a GPT system is the $d$-dimensional classical system $\Delta_d$. The normalized states of $\Delta_d$ consist of probability distributions over $d$ outcomes, which form a convex subset of $\Rl^d$. The effects are the response functions, namely all linear functionals $e$ which map states to $[0,1]$. In the vector representation, the normalized states are $d$-dimensional vectors with entries in $[0,1]$ which sum to 1 and the effects are vectors with entries in [0,1]. The 0 effect is the vector $(0,...,0)^\top$ and the unit effect is the vector $(1,...,1)^\top$. The states form a $d$-dimensional simplex, whilst the effects form a $d$-dimensional hypercube.

\subsection{Uniqueness of the decomposition $D=SE$}
\label{SubsecUniqueness}
\begin{lemma}
Suppose that $D=SE$, where $S$ is a real $m\times k$ matrix, $E$ is a real $k\times n$ matrix, and $k={\rm rank}\,D$. If $D=S'E'$ is another decomposition with these properties, then there is an invertible matrix $L$ such that $S'=SL$ and $E'=L^{-1}E$.
\end{lemma}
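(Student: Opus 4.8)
The plan is to use the standard fact that a rank-$k$ factorization of $D$ through a $k$-dimensional intermediate space is essentially unique, the only ambiguity being a change of basis $L$ on that $k$-dimensional space. First I would observe that the rank hypothesis forces \emph{both} factors to have full rank: since $\rank(SE)\le\min\{\rank S,\rank E\}$ and $S$ has only $k$ columns while $E$ has only $k$ rows, the equality $\rank D=k$ can hold only if $\rank S=\rank E=k$, i.e.\ $S$ has full column rank and $E$ has full row rank. The same applies to $S'$ and $E'$. In particular, $S$ regarded as a linear map $\mathbb{R}^k\to\mathbb{R}^m$ is injective, so that $SM=0$ implies $M=0$ for any matrix $M$ with $k$ rows; and likewise for $S'$.

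Next I would identify the column spaces. The column space of $D=SE$ is contained in that of $S$; both are $k$-dimensional (the former because $\rank D=k$, the latter because $S$ has $k$ linearly independent columns), so they coincide. The identical argument gives $\operatorname{col}(D)=\operatorname{col}(S')$, hence $\operatorname{col}(S)=\operatorname{col}(S')$. Therefore every column of $S'$ is a linear combination of the columns of $S$, which is exactly the statement that $S'=SL$ for some $k\times k$ matrix $L$; concretely one may take $L=(S^\top S)^{-1}S^\top S'$, well defined because $S^\top S$ is invertible (full column rank), and one checks $SL=S(S^\top S)^{-1}S^\top S'=S'$ since $S(S^\top S)^{-1}S^\top$ is the orthogonal projector onto $\operatorname{col}(S)=\operatorname{col}(S')$. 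By symmetry there is a $k\times k$ matrix $L''$ with $S=S'L''$, so $S=S'L''=SLL''$; injectivity of $S$ forces $LL''=I_k$, and since $L,L''$ are square this makes $L$ invertible with $L^{-1}=L''$.

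Finally I would recover the relation between the second factors. Substituting $S'=SL$ into $D=S'E'$ gives $SE=D=SLE'$, hence $S(E-LE')=0$; injectivity of $S$ yields $E=LE'$, that is $E'=L^{-1}E$, which completes the argument.

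I do not expect a genuine obstacle: the whole proof is elementary linear algebra. The one point that deserves care — and the only place the hypothesis $k=\rank D$ enters — is the first step, where this hypothesis is precisely what rules out ``wasteful'' factorizations and guarantees that $S$, $S'$, $E$, $E'$ all have full rank. Without it, the matrix $L$ relating $S$ and $S'$ need be neither well defined nor invertible, so it is worth isolating that observation cleanly at the outset.
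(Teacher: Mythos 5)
Your proof is correct and follows essentially the same route as the paper's: both arguments hinge on the observation that $k=\rank D$ forces all four factors to have full rank, and then construct $L$ as a one-sided (pseudo)inverse composition --- you take $L=(S^\top S)^{-1}S^\top S'$ on the state side where the paper takes $L=E(E')^+$ on the effect side. Your version is marginally more self-contained in that it certifies invertibility via the symmetric construction $S=S'L''$ and injectivity of $S$, rather than via Moore--Penrose identities such as $(E(E')^+)^+=E'E^+$, but the underlying mechanism is identical.
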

\begin{proof}\renewcommand{\qedsymbol}{}
If $M$ is some matrix, we denote its Moore-Penrose pseudoinverse~\cite{Barata} by $M^+$. Set $L:=E(E')^+$, then
\begin{equation}
SL=SE(E')^+=D(E')^+=S'E'(E')^+.
\end{equation}
Now, the $k$ rows of $E'$ are linearly independent, because otherwise, we would have ${\rm rank}\, D \leq {\rm rank}\, E'<k$. This implies $E'(E')^+=\mathbf{1}$, and hence $SL=S'$. Similarly, the $k$ columns of $E$ are linearly independent, which implies both $E^+ E=\mathbf{1}$ and $(E(E')^+)^+=E' E^+$. Thus
\begin{equation}
L^+ E = (E(E')^+)^+ E = E' E^+ E =E'.
\end{equation}
It remains to be shown that $L$ is invertible. Using all of above, this follows from
\begin{equation}
L^+ L =(E(E')^+)^+ E (E')^+ = E' E^+ E (E')^+ = \mathbf{1},
\end{equation}
hence $L^+=L^{-1}$.
\end{proof}

\subsection{Solving the weighted low-rank approximation problem of~\Cref{eq:chisquared}\label{sec:optimization_prob}}

In this section, we outline the methodology introduced in~\cite{Mazurek2021} to find a rank-$k$ matrix $D^{\textrm{realized}}$, which serves as the best estimate for the ideal probability table $D$ leading to the experimentally observed statistics $F$. Let us recall that this is achieved by minimizing the weighted $\chi^2_k$ statistics in the following optimization problem~(\ref{eq:chisquared}):
\begin{equation}
\begin{array}{rl}
   \min\limits_{D \in M_{mn}} &  \chi^2_k=\sum\limits_{ij} \frac{(F_{ij}-D_{ij})^2}{\Delta F_{ij}^2} \\
    \textrm{ subject to} & \textrm{rank}(D)=k,\, 0\leq D_{ij}\leq 1,  
\end{array}
\end{equation}
where $\Delta F_{ij}^2$ is the statistical variance in $F_{ij}$. Note that $D=SE$ contains the inner products $\vec{s}^\top_i \vec{e}_j$, which makes Eq.~\eqref{eq:chisquared} a non-convex optimization problem. Moreover, this minimization problem is known to not have an analytical solution~\cite{Low-Rank-Matrix, Markovsky-LowRankMatrix}. Following~\cite{Mazurek2021}, we employ an iterative approach that decomposes the optimization into two convex subproblems. Specifically, we alternate between optimizing the states while fixing the effects and optimizing the effects while fixing the states, with each optimization being a convex quadratic program~\cite{Mazurek2021}. This iterative process, known as the see-saw algorithm~\cite{PalVertesi-seesaw, WernerWolf-seesaw}, continues until the cost function, here the $\chi_k^2$, converges to a desired numerical precision. However, it is important to note that due to the non-convexity of the general problem, this procedure does not guarantee convergence to a global minimum. We refer the reader to~\cite[Appendix C]{Mazurek2021} for an explicit description of how the $\chi^2_k$ minimization in Eq.~\eqref{eq:chisquared} can be translated into a series of alternating convex optimizations.

The output of the optimization algorithm is a rank $k$ matrix $D$ with entries in $[0,1]$. In order to obtain the corresponding GPT states and effects, we must factor $D$ into two matrices $S$ and $E$. Note that this factorization is non-unique.
As in~\cite{Mazurek2021}, we first append a column of 1's to $D$, which reflects the probability associated with the unit effect (corresponding to the identity operator in quantum theory), assuming no experimental losses. Then, as described in~\cite[Appendix C]{Mazurek2021}, we perform a QR decomposition of $D$ to express it as $D = SE$. This ensures that the first column of $S$ consists of 1's, thus effectively encoding the normalization of the GPT states (analogous to the trace degree of freedom for quantum states).

Hence the matrices of states and effects have the following forms:
\begin{align}
    S &= \begin{pmatrix}
        1 & s_1^1 & \hdots & s_1^{k-1} \\
        1 & s_2^1 & \hdots & s_2^{k-1} \\
        \vdots & \vdots & \ddots & \vdots \\
        1 & s_m^1 \hdots & \hdots & s_m^{k-1}
    \end{pmatrix} = \begin{pmatrix}
          \vec s_1^\top \\
          \vec s_2^\top\\
          \vdots \\
          \vec s_m^\top
      \end{pmatrix} , \\
    E& =  \begin{pmatrix}
        1 & e^0_1 & \hdots & e_n^{0} \\
        0 & e^1_1 & \hdots & e_n^{1} \\
        \vdots & \vdots & \ddots & \vdots \\
        0 & e^{k-1}_1 & \hdots & e^{k-1}_n
    \end{pmatrix} = \begin{pmatrix}
          \vec e_1 & \vec e_2 & \hdots & \vec e_n
      \end{pmatrix} .
\end{align}
From the optimization problem and choice of factorization, we obtain the unit effect $\mathbf{1} = \vec e_0$ and $n$ effects $\vec e_1\dots \vec e_n$, where $\vec s_i^\top \vec e_j\in[0,1]$ for all $i,j$ by construction. Every measurement $M_j$ has two outcomes $0$ and $1$ with  $ p(0|P_i,M_j) = \vec s_i^\top \vec e_j$. Since $p(0|P_i, M_j) + p(1|P_i, M_j) = 1 $ it follows that $p(1|P_i, M_j) = 1 - s_i^\top e_j = s_i^\top (\mathbf{1} - e_j) $.

Hence, in order to account for the probabilities $p(1|P_i, M_j)$, the effect space should include the complement effect $(\mathbf{1} - \vec e_j)$ for every $\vec e_j$. We append the $(n+1) \times k$ matrix of complement effects $E'$ with columns $(\mathbf{1} - \vec e_j)$ to the matrix $E$ to obtain a $2(n+1) \times k$ matrix of effects
(which we shall label as $E$ hereon for convenience). Augmenting the set of effects in this manner does not change the set of consistent states or effects.

\subsection{Determining the best fit rank $k$}

When solving Eq.~\eqref{eq:chisquared}, note that as the chosen rank $k$ increases, the error of the fit $\chi^2_k$ naturally decreases. However, note that in general the experimental noise causes $F$ to be a full rank matrix, even when the underlying $D$ it approximates is not. Therefore, increasing $k$ can result in a model that overfits the experimental data, which is to say that it fits the noise in the training set data $F$. In practice, this can be determined by separating the data into a training set $F=F^{\textrm{train}}$ and a test set $F'=F^{\textrm{test}}$, where the ${\chi^2_k}^{\textrm{test}}$ is evaluated according to~\Cref{eq:chisquared_test}. The ${\chi^2_k}^{\textrm{train}}$ for the training data determines how much the model underfits the data; the higher the ${\chi^2_k}$ value the worse the fit is. To test for overfitting, the model is applied to the test set $F^{\textrm{test}}$. Models which overfit will have a significantly worse ${\chi^2_k}^{\textrm{test}}$ since they will fit to noise which was present in $F^{\textrm{train}}$ but is not present in $F^{\textrm{train}}$. There is a value $k^{\textrm{opt}}$ for which the models will transition from underfitting to overfitting. Then, the matrix $D_{k^{\text{opt}}}$ is chosen as the best fit. 

In summary, to estimate the rank $k^{\textrm{opt}}$, we hypothesize several values $k$ for $k^{\textrm{opt}}$ and test each one. Then, for each hypothesized rank, we compute the optimal estimate of the rank-$k$ matrix $D$ and evaluate its fit to the data using the $\chi^2$ statistic. Additionally, to decide which $k$ provides the best balance between fitting the experimental data and maintaining simplicity in the model, we split the analysis with a training and a test set to ensure that the model is not overfitting the data. As we have seen in Subsection~\ref{MethodsGPTs}, if the experiment is described by a qubit quantum model (as we expect, since we have prepared and probed a system that is, from quantum physics and the experimental setup, expected to be a qubit), then the rank is expected to be $4$. Indeed, following this method we have shown in~\Cref{fig:OverUnderFittingRank}~(a) that this is the case. Nonetheless, recall that we are keeping it general to follow a theory-agnostic approach, where we obtain the best GPT fit without assuming the correctness of quantum theory.

\subsection{Determining the set of consistent states}
\label{SubsecDetermineConsistent}

The methodology so far estimates one possible set of states and effects compatible with the experimental data, which we denote $\mathcal{S}$ and $\mathcal{E}$. However, as in~\cite{Mazurek2021}, we can further estimate the set of all logically consistent states and effects for the experiment, named respectively $\mathcal{S}_\consistent$ and $\mathcal{E}_\consistent$. The former is given by the set of all possible vectors $\vec s$, normalized such that $\langle \vec u,\vec s\rangle=1$, which, when acted on by any covector of the realized effect space, give valid probabilities, i.e.\ all $\vec s$ such that $0\leq\langle \vec e, \vec s\rangle\leq1$ for all $ \vec e\in\mathcal{E}$. Conversely, the latter is defined by the set of all covectors $\vec e$ that take the vectors of the realized state space to valid probabilities, i.e. all $\vec e$ such that $0\leq\langle \vec e, \vec s\rangle\leq1$ for all $ \vec s\in\mathcal{S}$. These inequalities tell us that the consistent spaces are the \textit{duals} of the realized spaces, i.e. $\mathcal{S}_\consistent = \textrm{dual}\left(\mathcal{E}\right)$ and  $\mathcal{E}_{\rm consistent}={\rm dual}(\mathcal{S})$.  If the realized state space $\mathcal{S}$/effect space $\mathcal{E}$ is smaller (in particular, if it is a strict subset of the set of states $\mathcal{S}_{\rm phys}$/effects $\mathcal{E}_{\rm phys}$ that could in principle be implemented on the system, which is always the case in experiments on quantum systems with infinitely many pure states), then its corresponding dual is larger --- since there is a larger set of covectors/vectors that combine appropriately to give valid probabilities. Accordingly, we know that $\mathcal{S}_{\rm phys}$ and $\mathcal{E}_{\rm phys}$ must lie somewhere between the realized and consistent state/effect spaces, comprising lower and upper bounds respectively.

The duals are calculated via the \texttt{cdd} library in Python~\cite{cdd_polyreps}, and amounts (in their terminology) to converting from the H-representation of a convex polyhedron $P$ to its V-representation. For a given matrix of points, say that of $\mathcal{S}$, that must be combined with $\mathcal{E}_\consistent$ to give valid probabilities, a convex polyhedron of the form $P=\{x | A x \leq b\}$ can be defined, where $A=[-[\mathcal{S}],[\mathcal{S}]]^\top$ and $b=[[0]^m,[1]^m]^\top$. The H-representation is the matrix $[b\; -\!A]$, representing the set of linear inequalities characterising $P$. The polyhedron can also be represented by the convex hull of its vertices, i.e. $P=\conv(v_1,...,v_n)$, which is called its V-representation. We can use the \texttt{get\_generators()} method of the \texttt{cdd} library to obtain the V-representation $[t\; -\!V]$, where $t=[[1]^n]^\top$ and $V=[\mathcal{E}_\consistent]$.

\subsection{Reparametrization of the state space}

In order to compare the different $\mathcal{S}^{\tau}$ for different decoherence times $\tau$, we are interested in finding an appropriate parametrization for these distinct sets. This is achieved by applying a linear transformation which approximates their respective ${\mathcal{S}}_{\rm consistent}$ set to a unit sphere. Note that ${\mathcal{S}}_{\rm consistent}$ is a function of $\mathcal{E}$, which does by construction not depend of $\tau$. As explained in Subsection~\ref{MethodsGPTs}, we can always reparametrize GPT systems linearly without changing any of their relevant properties. The method we present is split in three steps: i) we make sure that the set of data points we work with are all extremal points; ii) we center the boundary on the origin, and iii) we pose the search for the suitable linear transformation as an optimization problem. After finding an appropriate centering and transformation for the boundary of the consistent state space $\mathcal{S}_{\textrm{consistent}}$, we apply the same to all realized state spaces $\mathcal{S}^{\tau}$, so they can be appropriately compared.

Let us start with a cautionary preliminary step by removing all the interior points of a given set of states $\mathcal{S}$. This is to guarantee that we are left only with boundary points. We do so by probing whether a given state $\vec s_i \in \mathcal{S}$ can be obtained as a convex combination of all the other points $\vec s_j \in \mathcal{S}\setminus \{\vec s_i\}$ for $i\neq j$. This can be posed as a feasibility problem using linear programming. If the linear program is feasible, then we know that the state being probed $\vec s_i$ is a mixture of the other points; {i.e.,} there exists a convex combination $\sum_{j\neq i} \lambda_j \vec s_j = \vec s_i$ with $\vec s_j \in \mathcal{S}\setminus \{\vec s_i\}$, $0\leq \lambda_j \leq 1$ and $\sum_{j\neq i} \lambda_j = 1$. The feasibility linear program is the following:
\begin{equation}\label{eq:removeInterior}
\begin{array}{rl}
   \textrm{find} &  \vec{\lambda} \\
    \textrm{ subject to} & \sum_{j\neq i} \lambda_j \vec s_j = \vec s_i, \\
    & \sum_{j\neq i} \lambda_j = 1, \\
    & 0 \leq \lambda_j \leq 1 \textrm{ for all }j\neq i \ .
\end{array} 
\end{equation}
Therefore, when the linear program is feasible, we discard the state $\vec s_i$ and update the set of states to $ \mathcal{S}' =\mathcal{S} \setminus \{\vec{s}_i\}$. We iterate this procedure for all $i\in \{1,\ldots, n\}$ until we have tested all states $\vec s_i$ and we are left only with the ones which cannot be obtained as convex combinations of the rest. 

Next, we center the boundary $\mathcal{S}'$ on the origin by subtracting the columnwise mean; namely ${ \mathcal{S}}_{\textrm{centered}}=\mathcal{S}'-\vec{\mu}$, where is a $\vec{\mu}$ is a column vector with entries $\vec \mu_j=\frac{1}{n}\sum\limits_{i=1}^n\ \mathcal{S}_{i,j}$ denoting the mean for column $j$ of $ \mathcal{S}'$.

Finally, let us now provide a method to find a linear transformation which maps $\mathcal{S}_{\textrm{centered}}$ closer to the unit sphere. Consider a state $\vec s_i\in \mathcal{S}_{\textrm{centered}}$; the idea is to find a transformation $L$ such that $\| \vec{s}_i' \| \approx 1$ $\forall i$ with $\vec{s}_i' := L\vec{s}_i$. If we focus in the case where the state space has rank $k=4$, then $L$ will be a $(k-1) \times (k-1) = 3\times 3$ matrix. By means of the singular value decomposition, $L=U\Sigma V^\top$, note that we are only interested in finding a diagonal matrix $\Sigma := \textrm{diag}(\sigma_1,\sigma_2,\sigma_3)$ with $\sigma_i \in \mathbb{R}_{\geq 0}$ and a $3\times 3$ real orthogonal matrix $V$. We can neglect $U$ since, by definition, $\| \vec{s}_i'\| = \sqrt{ s_i^\top V \Sigma^\top U^\top U \Sigma V^\top \vec{s}_i} = \sqrt{ \vec{s}
_i^\top V \Sigma^2 V^\top \vec{s}_i}$. Furthermore, we use Euler angles parametrization for $V$, which guarantees the constraint $V  V^\top = \mathbf{1}$ while having full coverage of SO($3$) with 3 parameters (angles) $\alpha, \beta, \gamma \in \mathbb{R}$:

\begin{widetext}
\begin{equation} \label{eq:euler_angles}
    V:=\left(\begin{array}{ccc}
        \cos(\alpha)\cos(\gamma)-\sin(\alpha)\cos(\beta)\sin(\gamma) &  -\cos(\alpha)\sin(\gamma)-\sin(\alpha)\cos(\beta)\cos(\gamma) & \sin(\alpha)\sin(\beta) \\
        \sin(\alpha)\cos(\gamma)+\cos(\alpha)\cos(\beta)\sin(\gamma) & -\sin(\alpha)\sin(\gamma)+\cos(\alpha)\cos(\beta)\cos(\gamma) & -\cos(\alpha)\sin(\beta) \\
        \sin(\beta)\sin(\gamma) & \sin(\beta)\cos(\gamma) & \cos(\beta)
    \end{array}\right).
\end{equation}
\end{widetext}

Therefore, to find the suitable $\Sigma$ and $V$, we consider the following constrained minimization problem:
\begin{equation}\label{eq:findTransformation}
\begin{array}{rl}
\min\limits_{\sigma_1,\sigma_2,\sigma_3,\alpha,\beta,\gamma \in \mathbb{R}} &  \sum\limits_{i=1}^n \left(1-\|\vec{s}_i'\|^2\right)^2 \\
    \textrm{ subject to} &  \Sigma := \textrm{diag}(\sigma_1,\sigma_2,\sigma_3), \\
    & \sigma_1,\sigma_2,\sigma_3 \geq 0 \ .
\end{array} 
\end{equation}
Once we find the suitable translation $\mu$ and transformation $L=\Sigma V^\top$ for a given $\mathcal{S}_{\textrm{consistent}}$, we apply them in turn to all realized state spaces $L(\mathcal{S}^{\tau}-\mu)$ so that we can properly compare the sets as shown in \Cref{fig:decoherence_plots}.

\subsection{Lower bound on the purity of the prepared states}
\label{SubsecFidelity}
If we assume that quantum theory holds, then there will be a valid qubit density matrix $\rho_{\vec s}$ for every state $\vec s\in \mathcal{S}$. Let us assume that $\lambda_{\max}(\rho_{\vec s})\leq c$ for all $\vec s\in \mathcal{S}$, where $\frac 1 2\leq c\leq 1$. Then the Bloch vectors of all $\rho_{\vec s}$ are contained in ball of radius $2c-1$ around the origin in the Bloch representation. Since the one-norm distance of quantum states corresponds to the Euclidean distance in the Bloch ball, this means that $\mathcal{D}(\vec s,\vec s')\leq \frac 1 2 \|\rho_{\vec s}-\rho_{\vec s'}\|_1\leq 2c-1$ for all $\vec s,\vec s'\in\mathcal{S}$. Hence $c\geq \frac 1 2 \left(1+\max_{\vec s,\vec s'\in\mathcal{S}} \mathcal{D}(\vec s,\vec s')\right)$, and the distinguishability maximum is numerically determined to be at least $0.691\pm 0.004$, thus $c\geq 0.846\pm 0.002$.

\subsection{Markovian evolutions do not increase the volume}
\label{SubsecDet}
If $T$ is any transformation on a GPT system, then it acts linearly on the set of unnormalized states in $\mathbb{R}^k$~\cite{Plavala,Mueller}. In particular, in the parametrization where the first component of a state vector is its normalization, we have $\vec s'=\left(\begin{array}{c} 1 \\ s'\end{array}\right)=T \left(\begin{array}{c} 1 \\ s\end{array}\right)$, and so $s'=T' s + t$, with some vector $t$ and linear transformation $T'$ on $\mathbb{R}^{k-1}$. (To see this explicitly, write $T=(T_{ij})_{i,j=0}^{k-1}$, then $s'_j=T_{j0}+\sum_{n=1}^{k-1} T_{jn} s_n=:t+T' s$.) Clearly, transformations map valid states to valid states, so the image of the full state space must satisfy $T'\mathcal{S}+t \subseteq \mathcal{S}$, and so
\begin{align}
{\rm Vol}(\mathcal{S})\geq {\rm Vol}(T'\mathcal{S}+t)={\rm Vol}(T'\mathcal{S})=|\det T'|{\rm Vol}(\mathcal{S}).
\end{align}
Thus, $|\det T'|\leq 1$, and hence the action of the transformation on the normalized state space cannot increase the volume of any region.

To compute the volumes shown in \cref{fig:VolumeTau} for each $\mathcal{S}^{\tau}$, we have used the \textit{Delaunay Triangulation}~\cite{Lee1980} on their boundary points. This splits the polytope into non-overlapping simplices (tetrahedras in our case), for which the volume formula can be easily calculated. The total volume of $\mathcal{S}^{\tau}$ is then estimated by summing the volumes of these simplices.

\subsection{Decoherence in GPTs}
In the main text, we say that the state of the superconducting system decoheres, even though the notion of ``decoherence'' is usually only used in QT. Here, we point out that 1., there is a rigorous definition of decoherence for GPTs in the literature which applies to the long-time behavior of the superconducting system; and 2., for finite times, the system evolves irreversibly, and this allows us to draw very similar conclusions to the quantum version of decoherence, justifying the name also beyond QT.

Regarding 1., we will use the definition of~\cite{MuellerGarner}, which is slightly more general than that of~\cite{LeeSelbyNogo} or~\cite{RichensEntanglement}: a complete decoherence map $T$ is a transformation such that $T^2=T$, i.e.\ applying the transformation twice is the same as applying it once. In QT, an example is given by the map that removes all off-diagonal elements of a density matrix. Figure~\ref{fig:decoherence_plots} indicates that the long-time effect of the time evolution is to map every initial state $\omega$ to the same final state $\omega_0$ (which we physically expect to be the ground state), $T(\omega)=\omega_0$. This is clearly a decoherence map according to the previous definition, since $T(T(\omega))=T(\omega_0)=\omega_0$.

To see 2., note that reversible transformations $T$ map pure states of a GPT system to pure states, and they preserve the state space $\mathcal{S}$, i.e.\ $T(\mathcal{S})=\mathcal{S}$. This is clearly not the case in our system's time evolution: most pure states are mapped to mixed states (in particular, for times $\tau\leq 15\mu s$), and the image of $\mathcal{S}$ is strictly smaller than $\mathcal{S}$. Hence we have irreversible time evolution on the superconducting system $S$ that we are probing. However, if we assume that the total evolution of $S$ and its environment $E$ is reversible (in QT, this would be equivalent to global unitarity), then it cannot act on $S$ and on $E$ independently (otherwise, $S$ and $E$ would individually preserve their purity). In other words, it must be the reversible interaction between $S$ and $E$ that makes the initially close to pure state of $S$ mixed. This resembles the phenomenology of decoherence of QT und justifies the use of this term in the more general GPT context.

\subsection{Contextuality of GPTs}
\label{app:contextuality_app}

Having shown the image of the GPT state space according to the time delay $\tau$, we want a theory-independent test of their (non-)contextuality. It was shown in~\cite{Gitton-solvable} that non-contextuality can be established via a linear program which takes as an input a set of states and a set of effects, and determines whether their statistics, via a specified probability rule, can be reproduced by a classical (noncontextual) model. We make use of an open-source version of this program~\cite{Selby-LinearProgram}, which, moreover, in the case that the GPT is contextual, computes the amount of noise that must be added such that a noncontextual model can be fitted. More specifically, the linear program of~\cite{Selby-LinearProgram} is a test of \textit{simplex embeddability}: the property that a GPT's state space can be embedded into a classical probability simplex, with its effect space as the dual of that simplex. This was shown in~\cite{Schmid-Characterization,Selby2023} to be equivalent to the existence of a noncontextual ontological model.

For the nonembeddibility of the GPT to meaningfully imply nonclassicality, we must assume that there is some system that our experiment probes for which our preparations and measurements are \textit{relatively tomographically complete}~\cite{Schmid2024}. We note that the system being probed will in general be a subsystem of some larger system, for instance in this case it corresponds to the first two energy levels of an anharmonic oscillator. However, crucially, it must be such that the preparations and measurement of that subsystem are tomographically complete relative to each other, or equivalently, that the subsystem being probed is embeddable within the larger system~\cite{MuellerGarner}. Without assuming so, it is always possible that the statistics derive from some higher-dimensional, noncontextual system, c.f.\ the Holevo projection~\cite{Holevo}. Tomographic completeness in itself is a theory-independent assumption, but can be strongly motivated here by quantum physics --- in particular, by the state space of the qubit being a Bloch ball, for which we have chosen a tomographically highly overcomplete distribution of points to sample ($k=4$ linearly independent states and effects would have been sufficient for tomographic completeness, but many more of these procedures are needed to obtain an accurate estimate of the {shape} of the state and effect spaces as depicted in Figures~\ref{fig:rank4stateseffects} and~\ref{fig:decoherence_plots}).

To use the linear program, one inputs the GPT's set of states {$\mathcal{S}^\tau$} and set of effects $\mathcal{E}$, which were found via the theory-independent analysis of Section~\ref{subsec:theoryindependent_analysis}, as well as two row vectors specifying the unit effect $\vec u$ and the average state $\vec m^\tau$. The unit effect is given by $(1,0,0,0)$, which is always the first entry of the set of effects, whilst the average state is calculated by taking the columnwise mean of the set of states. The program then outputs a coefficient $r$, termed the \textit{robustness of nonclassicality}, as well as a set of epistemic states and response functions (which specify a noncontextual ontological model for the scenario). The important information is encoded in $r$, which quantifies how much depolarizing noise must be added such that a noncontextual ontological model can be fitted. In particular, the maximally mixed state defines a depolarization map $D^\tau_r(\vec s):=(1-r)\vec s + r \vec m^\tau$; the minimum $r$ such that a noncontextual ontological model can be fitted to the scenario thus constitutes an operational measure of nonclassicality. More specifically, a robustness of $r=0$ indicates that a noncontextual ontological model can be constructed precisely, without adding noise, whilst $r>0$ indicates that some amount of depolarizing noise is necessary for an ontological model -- thereby witnessing contextuality.

\subsection{Experimental setup and qubit readout characterization}
\label{SubsecExperimental}

\begin{figure}
    \centering
    \includegraphics[trim = 1300 410 280 180, clip, width=0.7\linewidth]{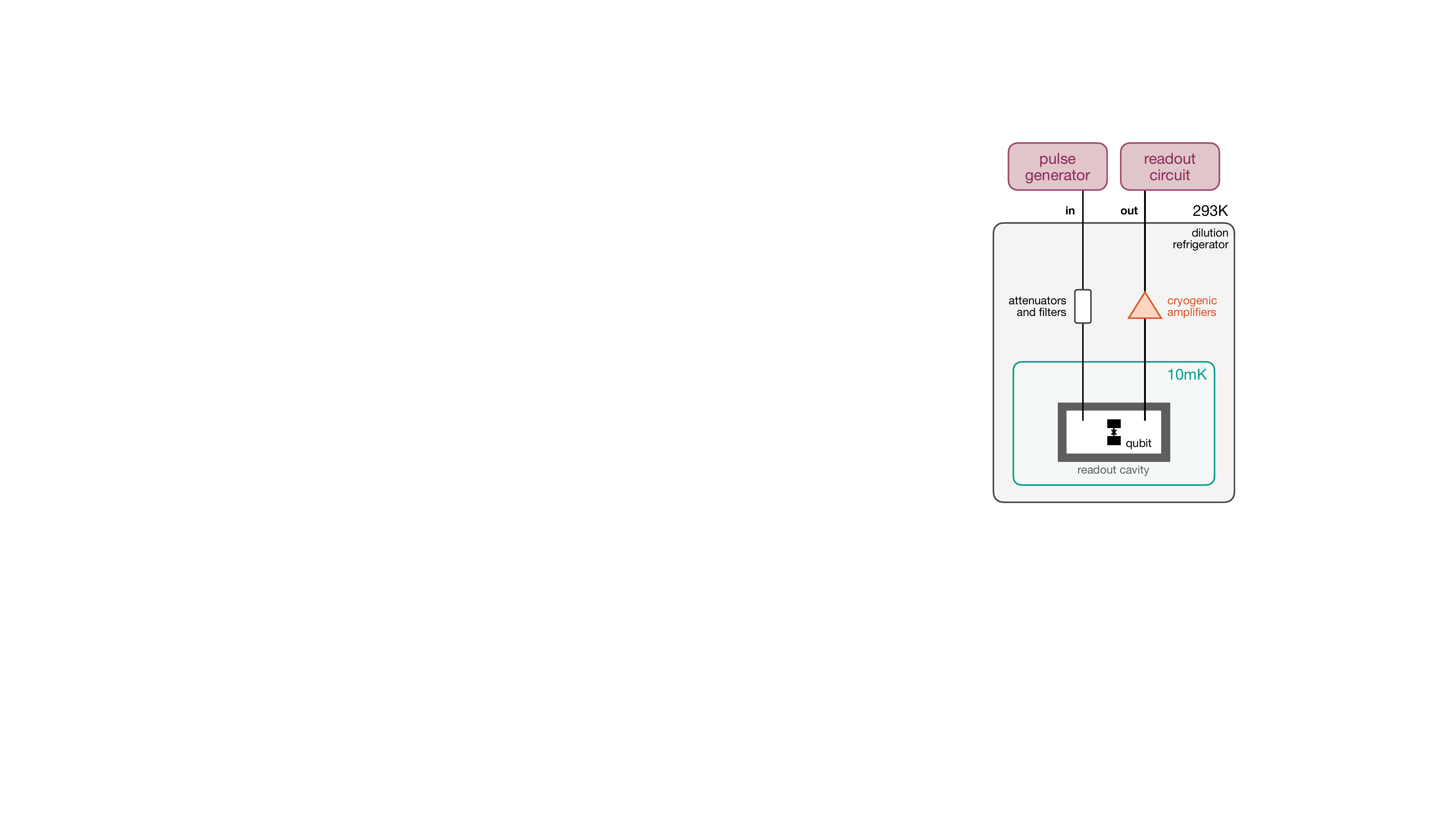}
    \caption{\textbf{Schematic illustration of the experimental setup used in this work.} The superconducting qubit device used in this work is hosted inside a three-dimensional microwave cavity cooled down at cryogenic temperature by a dilution refrigerator. Qubit state preparation and readout is controlled by microwave electronics at room temperature.}
    \label{fig:expsetup}
\end{figure}

The device used for this work is a superconducting transmon qubit hosted inside a three-dimensional readout cavity resonator.
For operation, this device is cooled down at $\sim\unit{10}{mK}$ temperature by a dilution refrigerator and operated by commercial microwave electronics, see Figure~\ref{fig:expsetup} for an illustration.
Qubit rotation and readout pulses are generated at room temperature by an arbitrary waveform generator and then up-converted to gigahertz frequencies.
After a room-temperature amplification stage these pulses are sent to the sample inside the refrigerator through a line of cryogenic attenuators and filters protecting the qubit from undesired electromagnetic radiation.
The qubit readout pulse is transmitted through the cavity, amplified by both cryogenic and room-temperature components, down-converted to megahertz frequencies, and finally digitalized by a FPGA before being recorded on a PC.

Following the standard practice in superconducting qubit experiments, the state of our qubit is measured in the strong dispersive regime of circuit-QED \cite{BlaisRMP}.
To calibrate our readout and characterize its fidelity, we collect 2000 measurement shots for the qubit in its ground state $\ket{0}$ and an equal number of shots for the qubit prepared in $\ket{1}$.
Representative results are shown in Figure~\ref{fig:ssreadout} (left), which allow us to identify the decision boundary for the $\ket{0}$ vs.\ $\ket{1}$ state.
For clarity, we also show a 1D projection of the measurement distributions along the direction connecting their means in Figure~\ref{fig:ssreadout} (right).

From the collected statistics, we can identify the conditional probability $p(a|P)$ of assigning the label $a\in\{\ket{0},\ket{1}\}$ when state $P\in\{\ket{0},\ket{1}\}$ was prepared. These are:
\begin{center}
\begin{tabular}{ c|c c } 
 & assigned $\ket{0}$ & assigned $\ket{1}$ \\
\hline
prepared $\ket{0}$ & 0.846 & 0.154 \\ 
prepared $\ket{1}$ & 0.148 & 0.852 \\ 
\end{tabular}
\end{center}
The average readout fidelity is then defined as $1-(p(0|1)+p(1|0))/2$, which gives $85(1)\%$. 
Contributions to the infidelity come from the finite cavity dispersive shift, qubit state decay during the $\sim\unit{10}{\mu s}$ long readout pulse, as well as finite thermal population of $\sim 10^{-3}$ at equilibrium.

\begin{figure}
    \centering
    \includegraphics[width=\linewidth]{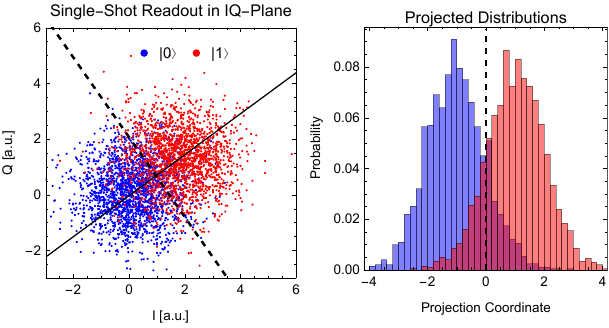}
    \caption{\textbf{Illustration of the qubit readout characterization.} Left: for each qubit initial state, 2000 measurements are collected for identifying the decision boundary (dashed line) in the IQ-plane. Right: projection of the distributions along the direction connecting their means (solid line).}
    \label{fig:ssreadout}
\end{figure}

\section*{Data availability}
The data that support the findings of this study are publicly available in the Zenodo repository~\cite{code_and_data}. 

\section*{Code availability}
The custom code used to produce the results presented in this paper is publicly available in the Zenodo repository~\cite{code_and_data}. 

\section*{Acknowledgements}

TG acknowledges helpful discussions with Robert Spekkens and Patrick Daley, and thanks Michael Mazurek for providing him with the data from~\cite{Mazurek2021}. These contributions helped with writing an initial version of part of the code used in the present paper. This research was funded by the Austrian Science Fund (FWF) 10.55776/PAT2839723; funded by the European Union -- NextGenerationEU (AA, TDG, CLJ, MPM). Furthermore, this research was supported in part by Perimeter Institute for Theoretical Physics (MPM). Research at Perimeter Institute is supported by the Government of Canada through the Department of Innovation, Science, and Economic Development, and by the Province of Ontario through the Ministry of Colleges and Universities.
MF was supported by the Swiss National Science Foundation Ambizione Grant No. 208886, and by The Branco Weiss Fellowship -- Society in Science, administered by the ETH Z\"{u}rich.

\section*{Author Contributions}

M.F. performed the experiment, A.A, C.J, T.G. and M.M. worked out the theoretical results and wrote the paper, A.A, C.J. and T.G. performed the numerical analysis, and M.M. directed the research.

\section*{Competing Interests}
The authors declare no competing interests.

\end{document}